\title{Probable Approximate Coordination}
\date{} 					
\author{ \href{https://orcid.org/0000-0003-2721-5928}{\includegraphics[scale=0.06]{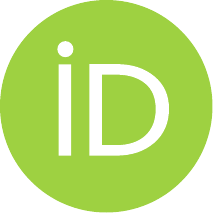}\hspace{1mm}Ariel Livshits}\ \\
	Yahoo! Research \\
	\texttt{livshits.ariel@gmail.com} \\
	\And
	\href{https://orcid.org/0000-0001-5549-1781}{\includegraphics[scale=0.06]{orcid.pdf}Yoram Moses} \\
	Technion, Israel \\
	\texttt{moses@technion.ac.il} \\
}
\date{}
\theoremstyle{plain}
\newtheorem{theorem}{Theorem}
\newtheorem{lemma}[theorem]{Lemma}
\newtheorem{corollary}[theorem]{Corollary}
\newtheorem{definition}[theorem]{Definition}
\theoremstyle{definition}
\theoremstyle{remark}
\newtheorem*{note*}{Note}
\newtheorem*{remark*}{Remark}
\newtheorem*{claim*}{Claim}
\DeclareMathOperator{\E}{\mathbb{E}}
\DeclareMathOperator{\M}{\mathcal{M}}
\DeclareMathOperator{\T}{\mathcal{T}}
\DeclareMathOperator{\Oof}{\mathcal{O}}
\DeclareMathOperator{\D}{\mathcal{D}}
\DeclareMathOperator{\p}{\tilde{{\it p}}}
\newcommand{\brk}[1]{\left[ #1 \right]}
\newcommand{\brs}[1]{\left\{ #1 \right\}}
\newcommand{\prs}[1]{\left( #1 \right)}
\newcommand{\pr}[1]{\Pr\prs{#1}}
\newcommand{\abs}[1]{\left | #1 \right|}
\begin{document}
\maketitle

\begin{abstract}
We study the problem of how to coordinate the actions of independent agents in a distributed system where message arrival times are unbounded, but are determined by an exponential probability distribution. Asynchronous protocols executed in such a model are guaranteed to succeed with probability~1. We demonstrate a case in which the best asynchronous protocol can be improved on significantly. Specifically, we focus on the task of performing actions  by different agents in a linear temporal order---a problem known in the literature as \emph{Ordered Response}. In asynchronous systems, ensuring such an ordering requires the construction of a message chain that passes through each acting agent, in order. 
Solving \emph{Ordered Response} in this way in our model will  terminate in time that grows linearly in the number of participating agents $n$, in expectation. We show that relaxing the specification slightly allows for a significant saving in time. Namely, if \emph{Ordered Response} should be guaranteed with high probability (arbitrarily close to~1), it is  possible to significantly shorten the expected execution time of the protocol. We present two protocols that adhere to the relaxed specification.
One of our protocols executes exponentially faster than a message chain, when the number of participating agents $n$ is large, while the other is roughly quadratically faster. For small values of~$n$, it is also possible to achieve similar results by using a hybrid protocol. \\
\end{abstract} 

\centerline
{\small \bfseries \scshape Acknowledgements}
\begin{quote}
Yoram Moses is the Israel Pollak academic chair at the Technion. This work was supported in part by the Israel Science Foundation under grant 2061/19. The authors would like to thank Yonathan Shadmi for his most valuable help.
\end{quote}

\section{Introduction}

Numerous applications of distributed systems rely on promptly responding to spontaneously occurring events triggered by the environment. These events can range from the activation of fire alarms or smoke detectors to transactions involving bank account deposits or withdrawals. In order to ensure that the system behaves correctly, it may be necessary to execute one or more relevant actions. The sequential execution of these actions often holds significance, particularly when dealing with financial transactions. Typically, a diverse range of such actions, including condition testing and related updates, must be carried out to successfully complete these transactions. 

This paper considers solutions to a distributed coordination problem called \emph{Ordered Response} (\emph{OR}).
In \emph{Ordered Response}, agents must perform individual actions in a particular linear order, in response to the spontaneous arrival of an external input to the system. For example, let there be a system with three agents $i_1,i_2$ and $i_3$. Let $t_1, t_2$ and $t_3$ be the points in time at which the agents perform their respective actions. Then it must hold that $t_1 \leq t_2 \leq t_3 < \infty$ in every run of a protocol that solves \emph{OR}. Notice that if the agents act simultaneously, then the specifications of \emph{OR} are satisfied. 

The \emph{Ordered Response} problem has been analyzed previously in the contexts of  synchronous and asynchronous message-passing communication models \cite{ben2010beyond,chandy1985processes}. However, in many modern day multi-agent systems, message delays are subject to probabilistic bounds, i.e., determined according to a probabilistic distribution. Examples can be seen in wireless sensor networks (WSNs). A WSN is a distributed multi-agent system that is often  composed of a large quantity of disposable sensors that communicate using wireless broadcast. Messages sent in this network experience random delays due to random changes in the environment. However, it was observed in \cite{abdel2002analysis} that a single server M/M/1 queue \cite{kleinrock1975theory} can represent the cumulative link delay in a WSN, when the random delays are modeled as exponential random variables.

In this work, we investigate the \emph{Ordered Response} problem in a model where message arrival times are unbounded, but are sampled from an exponential probability distribution.\footnote{For modelling WSN link delays, the most
widely used distributions are Gaussian, exponential, gamma and Weibull \cite{leon1994probability,papoulis1991probability}.
We opted for the exponential distribution for its simplicity in analyzing the model, as considering other distributions would introduce technical challenges beyond the scope of our work.} We are particularly interested in protocols that achieve correctness with high probability (w.h.p.) and terminate in a short expected time. 
It is unclear, based on initial observations, whether the previous solutions are the most optimal for this particular task.

One of the conclusions that can be drawn from previous work done by Chandy \& Misra \cite{chandy1985processes}
is that ensuring a linear ordering of actions at distinct sites of an asynchronous system necessitates the existence of a message chain among the coordinating agents. Namely, in order for agents to ensure that their respective actions are performed in a linear temporal order, a message chain visiting each of the sites in this order must be formed. 
Concretely, every asynchronous protocol that solves \emph{Ordered Response} must construct a message chain among the agents. Consequently, in an asynchronous 
system of $n$ agents, \emph{Ordered Response} can only be solved in time that is linear in $n$, since a chain of at least $n-1$ 
messages must be sent. 

 Since message delays  are unbounded, our probabilistic model appears to be closer to an asynchronous model, than a synchronous model. However, the assumption of exponentially distributed delays may be leveraged to significantly shorten the expected time of termination. Intuitively, since they naturally induce a distribution on the runs of a given protocol, it is possible to choose appropriate waiting times for the agents such that correct runs are more likely to happen. For example, say an agent broadcasts a message to $99$ other agents. Assuming that arrival times are independent and identically distributed exponential random variables with parameter $1 \ [sec^{-1}]$, then, with probability  $0.999$, all messages will arrive within roughly $11.52$ seconds. So, assuming that every agent has access to a global clock and that the broadcast happened at time $t=0$, after $12$ seconds all agents may act simultaneously, achieving Ordered Response with high probability. In contrast  to a message chain, where the expected response time would be at best $99$ seconds, a significant improvement.\footnote{We assume that agents can send only a constant amount of messages for reasons we will go into later on.}

The main contributions of this paper are:
\begin{itemize}
    \item A study of \emph{Ordered Response} protocols in the Exponentially Distributed Delay (EDD) model, which we are the first to formally define.
    \item Two tunable \emph{Ordered Response} protocols are presented. Both solve the problem with probability as close to $1$ as desirable. The first assumes the existence of a global clock and executes \textbf{exponentially} faster than a message chain, when the number of participating agents $n$ is large. The second does not use a global clock, and instead employs a procedure that ensures that, with high probability, the local clocks of the agents are \emph{probably approximately} synchronized. As a result, it executes slower than the first former protocol, but is still quadratically faster than the message chain protocol.
    \item For each of the above protocols, we also provide a hybrid version that combines the protocol with a message chain to efficiently achieve \emph{Ordered Response} w.h.p.\  for any $n$.
\end{itemize}

\section{Related Work}

The \emph{Ordered Response} problem was introduced by Ben-Zvi \& Moses in \cite{ben2010beyond}, and  was investigated there in the context of the Bounded Communication Model, in which message delays have (deterministic) upper bounds.  Given such bounds, the mere passage of time can provide information about the occurrence of events at remote sites, without the need for explicit confirmation. The authors extend Lamport's \emph{happened-before} relation \cite{lamport2019time} to define a causal structure called the ``centipede'' (a strict generalization of the ``message chain"), and show  that centipedes must exist in every execution in which linear ordering of actions is ensured. The concurrent Ordered Response protocols presented in this work implement a similar communication approach to the centipede by broadcasting  information. Our model does not assume deterministic upper bounds on message delays, but rather probabilistic ones. The result is a protocol that solves Ordered Response with high probability.

Assuming an exponential distribution on message arrival times is not new. In fact, such assumptions are frequently made in the context of Wireless Sensor Networks (WSNs). Specifically, for the problem of estimating and/or bounding clock skew of sensors in WSNs \cite{abdel2002analysis, chaudhari2009energy, jeske2005maximum, jeske2003estimation, leng2011low, zennaro2013network}. It was observed in \cite{abdel2002analysis} that a single server M/M/1 queue can represent the cumulative link delay in a WSN, when the random delays are modeled as exponential random variables. Moreover, the Minimum Link Delay algorithm, proposed in  \cite{abdel2002analysis}, which shows good performance, was derived independently in \cite{jeske2005maximum} assuming exponentially distributed network delays. Therefore, the assumption of an exponential distribution seems to be suited for modeling WSN link delays.

In the following, we will introduce a Concurrent Ordered Response protocol designed for an environment lacking a global clock. This protocol employs a procedure to achieve probable approximate synchronization of the agents' local clocks. Previous literature has proposed clock synchronization protocols in environments with probabilistic message delays \cite{arvind1994probabilistic,cristian1989probabilistic,olson1994probabilistic, palchaudhuri2003probabilistic}. These protocols, similar to our solutions, guarantee clock synchronization with probability as close to $1$ as desirable. However, they rely on agents being able to remotely read another agent's local clock multiple times to accomplish this. The \emph{novelty} of our approach lies in its minimal communication requirements.\footnote{Minimal communication is crucial in WSNs due to battery life concerns.} Agents broadcast messages only once (or twice in the hybrid variant) in order to achieve probable approximate synchronization. Along with the subsequent second phase, the agents successfully achieve Ordered Response with high probability.

The rest of this work is organized as follows. Section \ref{preliminaries} presents our model and existing \emph{Ordered Response} solutions in the synchronous and asynchronous models. Section \ref{with_global_clock} presents the CORE protocol when a global clock is present in the model. As well as, a hybrid protocol of CORE with a message chain. Section \ref{without_global_clock} details a procedure to \emph{probably approximately} synchronize local clocks of agents, and how we use it in the CORE protocol when a global clock is not present in the model.  Finally, Section \ref{conclusions} concludes this work with our conclusions.

\section{Preliminaries}\label{preliminaries}

\subsection{Model Formulation}

In the Ordered Response problem, we consider a set of $n+1$ agents $\mathbb{P} \triangleq \{i_0, i_1,\ldots i_n\}$. The agents are assumed to be connected via a complete communication network, and agents can communicate with one another by sending and receiving messages (including broadcasts) that contain any form of information. We assume that the value of~$n$ is available as an input to the protocols, and every agent is aware of its own ID, as well as the IDs of all the other agents in the system. The system is considered to be event-driven, i.e., agents change their state only when some local event occurs. Agents wait between the occurrence of these events, and are assumed inactive prior to receiving the first message or external input from the environment in an execution.

We assume that all agents can measure the passage of time accurately using local clocks. A clock can be used to set a timer that will cause the agent to activate at some point in the future. Hence, any local event experienced by an agent is either the receipt of a message or a timer signalling. All agents are assumed to be inactive up until the first local event takes place. When an agent is activated, it immediately performs some local calculation and/or action as dictated by a deterministic distributed protocol that is shared among all agents in the system. We assume the agents are well-behaved and reliable, i.e., agents do not crash or disobey the protocol. 

We designate agent $i_0$ to be a ``supervisor'' agent. At time $t=0$, the agent $i_0$ receives an external input from the environment that triggers the system into action. 
Each of the other ``worker'' agents~$i_k$ has a special action $\alpha_k$ unique to itself. 
When the external input is delivered, the supervisor begins the process by sending messages to a subset of agents. From then on, each of the agents, once active, coordinates with the others 
with the goal of performing their respective actions in a \emph{linear temporal order}. 

Formally, let $\mathcal{P}$ be some shared protocol, and denote the set of runs of  $\mathcal{P}$ by $R(\mathcal{P})$. We say that a run $r\in R(\mathcal{P})$ adheres to the specifications of \emph{Ordered Response} if the following two properties hold:
\begin{itemize}
    \item \emph{Safety:}\quad  $\forall  k > 1:$ agent $i_k$ does not perform the action $\alpha_k$ before agent $i_{k-1}$ performs the action $\alpha_{k-1}$,

    \item \emph{Liveness:}\quad  every agent $i_k$ eventually performs action $\alpha_k$.
\end{itemize}
Accordingly, we say that $\mathcal{P}$ solves \emph{Ordered Response} if every run $r\in R(\mathcal{P})$ of the protocol adheres to the \emph{Ordered Response} specifications. Later, we will investigate protocols that do not solve \emph{Ordered Response}, but rather solve it with high probability. We say that a protocol $\mathcal{P}$ solves \emph{Ordered Response} with some probability $p$, if the probability that a run of the protocol 
will satisfy both \emph{Safety} and \emph{Liveness} is at least~$p$.

\subsection{A Probability Measure on Runs}

Unlike in classical message-passing communication models, we assume that message delays are determined by a probability distribution. Specifically, the arrival times of messages that are sent over any communication channel, are assumed to be sampled from an exponential distribution with parameter $\lambda$. The delays of distinct messages (including those that are sent via broadcast) are assumed to be statistically independent of one another. We name this communication model the Exponentially Distributed Delay model, denoted by the abbreviation EDD, which is shorthand for $\text{EDD}(\lambda)$ when $\lambda$ is clear from the context.

In this work, we exclusively consider deterministic protocols, and so the communication infrastructure provides the only source of randomness. The exponential distribution on message delays induces a probability distribution on the set of runs of any given protocol $\mathcal{P}$. We map every run of a protocol $\mathcal{P}$ in the EDD model, to a sampling of a countably infinite set of exponential random variables with parameter $\lambda$. Every exponential random variable in this set represents the delay of a message sent in a run of the protocol. This induces a natural probability space $\mathcal{S} = (\Omega, \mathcal{F}, \Pr)$ that we use to prove our claims (see Appendix \ref{prob_space_form} for a complete exposition of the probability space).

In a probabilistic environment, a simplistic solution to the Ordered Response problem may involve repeatedly transmitting the same message between agents. The intention is to increase the likelihood of an earlier message arrival beyond what a single message's probability distribution allows. This approach increases 
the message complexity of the protocol and, perhaps more crucially, imposes considerable energy costs on the individual agents, which is often unreasonable, e.g., in WSNs. 
Therefore, we assume any agent can only transmit a constant number of messages (i.e., independent of the number of agents in the system). If a protocol does make use of a constant number of retransmissions of the same message to the same destination (say $k$ such messages), we can effectively model this as a single message that is sampled from an exponential distribution with a larger parameter ($\tilde{\lambda} \geq k \lambda$).

\subsection{Performance and Correctness Metrics}

In this work, we investigate protocols that ensure \emph{Ordered Response} with high probability. Specifically, we allow protocols that ensure a linear temporal order with some probability $p$. A run in which both the \emph{Safety} and \emph{Liveness} conditions hold we call a \emph{correct run}. Formally, let $t_k : \Omega \to \mathbb{R}$ be a random variable that represents the time that agent $i_k$ performs $\alpha_k$. The value of $t_k$ is determined by the protocol $\mathcal{P}$ and a realization $\omega \in \Omega$, i.e., for every agent $i_k$,  the protocol $\mathcal{P}$ induces a function $\tau_k$ such that $t_k = \tau_k(\omega)$. The probability of a correct run of $\mathcal{P}$ is then given by ${p \triangleq \Pr(t_1 \leq t_2 \leq \ldots \leq t_n < \infty)}$.

We compare protocols using two performance measurements that are a function of the number of agents $n$. The first is message complexity, as in the number of messages sent, and the second is \emph{response time}. The response time of an Ordered Response protocol in a given run is defined as the time at which the last agent performs its unique action. Thus, the response time of the protocol is a random variable $RT \triangleq \max\{t_1, \ldots, t_n\}$. Since message delays are unbounded in the EDD model, there exist runs of the protocol that never terminate. Although the probability measure of these runs is zero, measuring the response time of protocols in this model is meaningless. Instead, we measure the \emph{expected} response time, denoted by $\E[RT]$.

\subsection{Asynchronous Ordered Response and the Message Chain Protocol}

In an asynchronous system, there is no global clock and no bound on the arrival time of messages. Fortunately, there is a very simple Ordered Response protocol in this model as well, called the Message Chain protocol. At time $t=0$, the supervisor $i_0$ sends a message to~$i_1$. For all $k \geq 1$, when $i_k$ receives a message from $i_{k-1}$ it performs $\alpha_k$ and sends a message to agent $i_{k+1}$ (if $k < n$). This creates a message chain among agents. 

Since communication in the asynchronous setting is reliable, the message chain protocol ensures \emph{Safety} and \emph{Liveness} 
in every run. The correctness of the protocol stems from the fact that no agent acts before receiving a direct message from its predecessor in the agent ordering. Notice that a direct message is not a necessary condition for an agent to act. Rather, as the analysis in Chandy \& Misra \cite{chandy1985processes} suggests, a message chain between every two consecutive agents in the ordering is required to ensure a linear temporal ordering of actions. Therefore, the message chain protocol presented above is an optimal solution for the asynchronous model in terms of both message complexity and response time.

To measure the response time of a protocol in an asynchronous system, we consider each message arrival time to be of length of one time unit.\footnote{This is generally the accepted approach in the literature for measuring time of execution of a run in an asynchronous system \cite{lynch1996distributed}.} Under such accounting, the message chain protocol exhibits $\Theta(n)$ response time, due to its sequential behavior. Since the Message Chain protocol is the optimal Ordered Response protocol in an asynchronous system, this is also a lower bound for the asynchronous Ordered Response problem, in general.

\subsection{Ordered Response with Probability 1}

While message arrival times are not deterministically bounded in the EDD model,  they are bounded in the probabilistic sense. In particular, messages are more likely to arrive sooner rather than later. Our focus in this work will be the design and analysis of Ordered Response protocols that achieve good performance in the EDD model. 

We consider the Message Chain protocol as our baseline. Notice that the protocol ensures that the probability of a \emph{correct run} is $1$. The reason is that the protocol ensures \emph{Liveness} only \emph{w.p.}~$1$, due to the fact that messages in the EDD model arrive in finite time \emph{w.p.}~$1$.

\begin{definition}
We say that a run $r \in R(\mathcal{P})$ contains a message chain, if there exist a set of~$n$ messages $m_1, m_2, \ldots, m_n$ that are sent in $r$, and for all $1 \leq k \leq n-1$ the agent $i_k$ sends $m_{k+1}$ only after it receives $m_k$ from agent $i_{k-1}$.
\end{definition}

Additionally, in a run that contains a message chain, we say that an agent $i_k$ ``receives a message chain of length $k$ at time $t$'' if at time $t$ agent $i_k$ receives message $m_k$.

\begin{theorem}\label{expected_response_lemma}
In the EDD model, if every run of a protocol $\mathcal{P}$ contains a message chain, then it holds for $\mathcal{P}$ that $E[RT] \in \Omega(\frac{n}{\lambda})$.
\end{theorem}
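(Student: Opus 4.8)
The plan is to establish a lower bound on the expected response time by analyzing how long it takes for a message chain to complete. Since every run of $\mathcal{P}$ contains a message chain $m_1, m_2, \ldots, m_n$, and each agent $i_k$ can only send $m_{k+1}$ after receiving $m_k$, the times at which these messages arrive are forced to occur in a nested, sequential fashion. The key observation is that the response time $RT = \max\{t_1, \ldots, t_n\}$ is bounded below by the time at which the final message $m_n$ of the chain is received, since agent $i_n$ cannot perform $\alpha_n$ before receiving $m_n$ (which itself depends on the entire preceding chain). So it suffices to lower-bound the expected arrival time of the last message in a chain of length $n$.

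First I would set up the decomposition of the total chain-completion time as a sum of per-hop delays. Let $X_k$ denote the delay of message $m_k$, i.e., the time between when agent $i_{k-1}$ sends $m_k$ and when $i_k$ receives it. Because an agent may wait before sending the next message, the total time to complete the chain is at least $\sum_{k=1}^{n} X_k$, where each $X_k$ is an exponential random variable with parameter $\lambda$ (the delay of a single message over a channel). The crucial point is that although the protocol may introduce additional waiting, it cannot remove the unavoidable transmission delays; hence the arrival time of $m_n$ stochastically dominates a sum of $n$ i.i.d.\ exponentials.

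Then I would compute the expectation. By linearity of expectation, $\E\left[\sum_{k=1}^{n} X_k\right] = \sum_{k=1}^{n} \E[X_k] = n/\lambda$, using that each exponential with parameter $\lambda$ has mean $1/\lambda$. Combining with $RT \geq \sum_{k=1}^{n} X_k$ pointwise (on the measure-one set of runs where the chain completes), monotonicity of expectation yields $\E[RT] \geq n/\lambda$, which gives $\E[RT] \in \Omega(n/\lambda)$ as required. A subtle point I would need to handle is whether the message delays $X_k$ along the chain are genuinely independent exponentials even though the \emph{choice} of which messages constitute the chain may depend on the realized delays; I would argue that regardless of how the chain is selected, each constituent message's delay is sampled from the same exponential distribution, and the sequential dependency structure only adds waiting time, so the bound on the sum of delays still holds.

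The main obstacle I expect is making the domination argument rigorous when the protocol is adaptive, since the set of messages forming the chain is not fixed in advance but determined by the run. One clean way around this is to argue directly on the chain structure: define the random variables recursively so that $m_{k+1}$'s send time is no earlier than $m_k$'s receipt time, and observe that the receipt time of $m_n$ telescopes into a sum that dominates $\sum_{k=1}^{n} X_k$. Care must also be taken to ensure that "send only after receipt" combined with the memorylessness of the exponential distribution does not inadvertently let the protocol exploit timing in a way that reduces the expected sum below $n/\lambda$; the memoryless property actually guarantees that each fresh message delay has full mean $1/\lambda$ regardless of prior history, which is precisely what secures the bound.
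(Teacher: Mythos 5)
Your proposal is correct and follows essentially the same route as the paper's proof: lower-bound $RT$ by the completion time of the message chain, which dominates a sum of \emph{i.i.d.}\ exponential delays with mean $\frac{1}{\lambda}$ each, and conclude by linearity and monotonicity of expectation. The only (immaterial) difference is bookkeeping: the paper counts the $n-1$ hops up to agent $i_{n-1}$'s termination, obtaining $\E[RT] > \frac{n-1}{\lambda}$, while you count all $n$ hops to the receipt of $m_n$, obtaining $\frac{n}{\lambda}$; both give $\Omega\!\left(\frac{n}{\lambda}\right)$.
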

\begin{proof}
If the run implements a message chain, then agent $i_{n-1}$ does not terminate prior to receiving a message chain of length $n-1$, since it has to send the final message $m_n$. By definition, the response time of a protocol in a given run is greater or equal to the time at which $i_{n-1}$ terminates. Thus, $RT > \xi$ where $\xi$ is a sum of $n-1$ \emph{i.i.d.} exponential random variables with parameter $\lambda$. Hence, $\E[RT] > \E[\xi] = \frac{n-1}{\lambda}$.
\end{proof}

From Theorem \ref{expected_response_lemma} we conclude that the expected response time of the Message Chain protocol in the EDD model is in $\Omega(\frac{n}{\lambda})$. Later on, we investigate protocols that trade off probability of correctness for expected response time that is sub-linear in the number of participating agents $n$.

\section{Ordering with a Global Clock}\label{with_global_clock}
In order to investigate the design of faster protocols, we relax the specification slightly to allow the protocol to guarantee \emph{Safety} and \emph{Liveness} properties with high probability (\emph{w.h.p.}). More precisely, we now consider protocols that solve Ordered Response with probability $0<p<1$, where $p$ can be as close to $1$ as desirable. Our goal is to find a protocol that significantly outperforms the Message Chain protocol, in terms of expected response time. 

The main disadvantage of the Message Chain protocol is its sequential behaviour. However, in the EDD model, messages are more likely to arrive earlier rather than later. Our approach leverages this towards speeding-up the response time of the protocol. The main idea behind the protocols that will be described in this section is the concurrent transmission of information to all parties. For example, instead of sending only one message to one agent informing him that the external input has arrived, the supervisor broadcasts this information to all agents in the system. We call this approach -- Concurrent Ordered REsponse, abbreviated by CORE.

\subsection{The CORE Protocol}

In a synchronous system, there usually exists a deterministic upper bound $\Delta$ on the arrival time of messages. In the CORE protocol there is a parameter $\D$ that is configurable, rather than being a constant defined by the environment. This parameter is chosen so that \emph{w.h.p.} all the messages that are sent in the supervisor's broadcast are delivered within time $\D$. The psuedocode of the protocol is presented in Algorithm \ref{synchronous_CORE}.

\begin{algorithm}
\caption{The CORE($\D$) Protocol}
\label{synchronous_CORE}
\begin{algorithmic}[1]
\Procedure{Protocol for Agent $i_0$}{}
    \InputEvent
        \State{send ``\textbf{trigger}" to all}
    \EndInputEvent
\EndProcedure
\newline
\Procedure{Protocol for Agent $i_k$}{}
\Event{``\textbf{trigger}"}
    \State{Wait until current time $\ge \D$}
    \State{perform $\alpha_k$}
\EndEvent
\EndProcedure
\end{algorithmic}
\end{algorithm}

At time $t=0$, the supervisor agent receives an external input from the environment. The agent then proceeds to broadcast a message to all other agents containing the word ``trigger". Agents that receive a ``trigger" message from the supervisor, wait until the global clock reads $t = \D$, and then act.%
\footnote{Notice that under our protocol several actions can be performed simultaneously, which is consistent with the original definition of Ordered Response in \cite{ben2010beyond}. But the CORE protocol can be easily modified at essentially no cost to support strict ordering of the actions. I.e., if~$t_i$ denotes the time at which the $i$'th action is performed, then $t_{k+1}> t_k$, rather than $t_{k+a}>t_k$, in good runs of the protocol. Namely, for an arbitrarily chosen small~$\epsilon$, we can modify line 6 of the protocol so that Agent~$i_k$ waits until time $\ge \D+\,(1-2^{-k})\cdot\epsilon$. In run {\it (c)} of Figure~\ref{sync_response_ex}, for example, the actions will be performed in strict linear temporal order, very soon after time~$\D$; indeed, they would all complete before time $\D+\epsilon$.} If  an agent receives the message after time $\D$, it does not wait and acts immediately. Several examples of runs of the CORE protocol are illustrated in Figure \ref{sync_response_ex}.

\usetikzlibrary{trees,positioning,shapes,calc}
\captionsetup{justification=raggedright,singlelinecheck=false}
\begin{figure}[h]
\vspace{3em}
  \begin{tikzpicture}
      \pgfmathsetmacro{\xscale}{0.6pt}
      \pgfmathsetmacro{\yscale}{0.6pt}
      \pgfmathsetmacro{\cradius}{6pt}
      \begin{tikzpicture}[
        dot/.style = {circle, fill, minimum size=#1,
                      inner sep=0pt, outer sep=0pt},
        dot/.default = 6pt  
        ]  

      
      \node (tl) {};
      \node (tr) [right=8*\xscale of tl] {};
      
      \node (t0l) at ($(tl)!0.15!(tr)$) {};
      \node (tdl) at ($(tl)!0.6!(tr)$) {};
      \node (t0h) [above=6*\yscale of t0l] {};
      \node (tdh) [above=6*\yscale of tdl] {};
      
      \node (i0l) at (t0h-|tl) {};
      \node (i0r) at (tdh-|tr) {};
      
      \node (i1l) at ($(i0l)!0.2!(tl)$) {};
      \node (i1r) at ($(i0r)!0.2!(tr)$) {};
      \node (i2l) at ($(i0l)!0.4!(tl)$) {};
      \node (i2r) at ($(i0r)!0.4!(tr)$) {};
      \node (i3l) at ($(i0l)!0.6!(tl)$) {};
      \node (i3r) at ($(i0r)!0.6!(tr)$) {};
      \node (i4l) at ($(i0l)!0.8!(tl)$) {};
      \node (i4r) at ($(i0r)!0.8!(tr)$) {};
      
      \draw[thick,->] (tl.center) -- (tr.center) node[anchor=north] {time};
      \draw[thick,dashed] (t0h.center) -- (t0l.center) node[anchor=north] {$0$};
      \draw[thick,dashed] (tdh.center) -- (tdl.center) node[anchor=north] {$\D$};
      \draw[dashed,gray] (i0l) -- (i0r) node[anchor= north east] {$i_0$};
      \draw[dashed,gray] (i1l) -- (i1r) node[anchor= north east] {$i_1$};
      \draw[dashed,gray] (i2l) -- (i2r) node[anchor= north east] {$i_2$};
      \draw[dashed,gray] (i3l) -- (i3r) node[anchor= north east] {$i_3$};
      \draw[dashed,gray] (i4l) -- (i4r) node[anchor= north east] {$i_4$};
      
      \node[black,dot=\cradius] (trg) at (i0l-|t0l) {};
      \node[blue,dot=\cradius] (alpha1) at (i1r-|tdl) {};
      \node[blue,dot=\cradius] (alpha2) at (i2r-|tdl) {};
      \node[blue,dot=\cradius] (alpha3) at ($(i3l)!0.68!(i3r)$) {};
      \node[blue,dot=\cradius] (alpha4) at ($(i4l)!0.75!(i4r)$) {};
      
      \draw[thick,->] (trg)--($(i1l)!0.5!(i1r)$);
      \draw[thick,->] (trg)--($(i2l)!0.3!(i2r)$);
      \draw[thick,->] (trg)--(alpha3.west);
      \draw[thick,->] (trg)--(alpha4.west);
      
      \node (cap_anch) at ($(tl)!0.5!(tr)$) {};
      \node (caption) [below=\yscale of cap_anch] {$(a)$ A correct run of CORE};
  \end{tikzpicture}
  \hfill
  \begin{tikzpicture}[
        dot/.style = {circle, fill, minimum size=#1,
                      inner sep=0pt, outer sep=0pt},
        dot/.default = 6pt  
        ]  
      
      \node (tl) {};
      \node (tr) [right=8*\xscale of tl] {};
      
      \node (t0l) at ($(tl)!0.15!(tr)$) {};
      \node (tdl) at ($(tl)!0.6!(tr)$) {};
      \node (t0h) [above=6*\yscale of t0l] {};
      \node (tdh) [above=6*\yscale of tdl] {};
      
      \node (i0l) at (t0h-|tl) {};
      \node (i0r) at (tdh-|tr) {};
      
      \node (i1l) at ($(i0l)!0.2!(tl)$) {};
      \node (i1r) at ($(i0r)!0.2!(tr)$) {};
      \node (i2l) at ($(i0l)!0.4!(tl)$) {};
      \node (i2r) at ($(i0r)!0.4!(tr)$) {};
      \node (i3l) at ($(i0l)!0.6!(tl)$) {};
      \node (i3r) at ($(i0r)!0.6!(tr)$) {};
      \node (i4l) at ($(i0l)!0.8!(tl)$) {};
      \node (i4r) at ($(i0r)!0.8!(tr)$) {};
      
      \draw[thick,->] (tl.center) -- (tr.center) node[anchor=north] {time};
      \draw[thick,dashed] (t0h.center) -- (t0l.center) node[anchor=north] {$0$};
      \draw[thick,dashed] (tdh.center) -- (tdl.center) node[anchor=north] {$\D$};
      \draw[dashed,gray] (i0l) -- (i0r) node[anchor= north east] {$i_0$};
      \draw[dashed,gray] (i1l) -- (i1r) node[anchor= north east] {$i_1$};
      \draw[dashed,gray] (i2l) -- (i2r) node[anchor= north east] {$i_2$};
      \draw[dashed,gray] (i3l) -- (i3r) node[anchor= north east] {$i_3$};
      \draw[dashed,gray] (i4l) -- (i4r) node[anchor= north east] {$i_4$};
      
      \node[black,dot=\cradius] (trg) at (i0l-|t0l) {};
      \node[blue,dot=\cradius] (alpha1) at (i1r-|tdl) {};
      \node[blue,dot=\cradius] (alpha2) at (i2r-|tdl) {};
      \node[red,dot=\cradius] (alpha3) at ($(i3l)!0.7!(i3r)$) {};
      \node[blue,dot=\cradius] (alpha4) at (i4r-|tdl) {};
      
      \draw[thick,->] (trg)--($(i1l)!0.5!(i1r)$);
      \draw[thick,->] (trg)--($(i2l)!0.3!(i2r)$);
      \draw[thick,->] (trg)--(alpha3.west);
      \draw[thick,->] (trg)--($(i4l)!0.25!(i4r)$);

      \node (cap_anch) at ($(tl)!0.5!(tr)$) {};
      \node (caption) [below=\yscale of cap_anch] {$(b)$ An incorrect run of CORE};
  \end{tikzpicture}
  \hfill
  \begin{tikzpicture}[
        dot/.style = {circle, fill, minimum size=#1,
                      inner sep=0pt, outer sep=0pt},
        dot/.default = 6pt  
        ]  
      
      \node (tl) {};
      \node (tr) [right=8*\xscale of tl] {};
      
      \node (t0l) at ($(tl)!0.15!(tr)$) {};
      \node (tdl) at ($(tl)!0.8!(tr)$) {};
      \node (t0h) [above=6*\yscale of t0l] {};
      \node (tdh) [above=6*\yscale of tdl] {};
      
      \node (i0l) at (t0h-|tl) {};
      \node (i0r) at (tdh-|tr) {};
      
      \node (i1l) at ($(i0l)!0.2!(tl)$) {};
      \node (i1r) at ($(i0r)!0.2!(tr)$) {};
      \node (i2l) at ($(i0l)!0.4!(tl)$) {};
      \node (i2r) at ($(i0r)!0.4!(tr)$) {};
      \node (i3l) at ($(i0l)!0.6!(tl)$) {};
      \node (i3r) at ($(i0r)!0.6!(tr)$) {};
      \node (i4l) at ($(i0l)!0.8!(tl)$) {};
      \node (i4r) at ($(i0r)!0.8!(tr)$) {};
      
      \draw[thick,->] (tl.center) -- (tr.center) node[anchor=north] {time};
      \draw[thick,dashed] (t0h.center) -- (t0l.center) node[anchor=north] {$0$};
      \draw[thick,dashed] (tdh.center) -- (tdl.center) node[anchor=north] {$\D$};
      \draw[dashed,gray] (i0l) -- (i0r) node[anchor= north east] {$i_0$};
      \draw[dashed,gray] (i1l) -- (i1r) node[anchor= north east] {$i_1$};
      \draw[dashed,gray] (i2l) -- (i2r) node[anchor= north east] {$i_2$};
      \draw[dashed,gray] (i3l) -- (i3r) node[anchor= north east] {$i_3$};
      \draw[dashed,gray] (i4l) -- (i4r) node[anchor= north east] {$i_4$};
      
      \node[black,dot=\cradius] (trg) at (i0l-|t0l) {};
      \node[blue,dot=\cradius] (alpha1) at (i1r-|tdl) {};
      \node[blue,dot=\cradius] (alpha2) at (i2r-|tdl) {};
      \node[blue,dot=\cradius] (alpha3) at (i3r-|tdl) {};
      \node[blue,dot=\cradius] (alpha4) at (i4r-|tdl) {};
      
      \draw[thick,->] (trg)--($(i1l)!0.5!(i1r)$);
      \draw[thick,->] (trg)--($(i2l)!0.3!(i2r)$);
      \draw[thick,->] (trg)--($(i3l)!0.7!(i3r)$);
      \draw[thick,->] (trg)--($(i4l)!0.25!(i4r)$);
      
      \node (cap_anch) at ($(tl)!0.5!(tr)$) {};
      \node (caption) [below=\yscale of cap_anch, align=center] {$(c)$ Same run, with a larger $\D$};
  \end{tikzpicture}
  \end{tikzpicture}
  \vspace{-4em}
\caption{ Three examples of runs of the CORE$(\D)$ protocol. \\ 
At time $t=0$, agent $i_0$ broadcasts a message to all other agents. The blue and red dots mark the points in time when the agents perform their actions. 
$(a)$ -- Agents $i_3$ and $i_4$ receive their messages late and act immediately and in order. This is a lucky run of the protocol.
$(b)$ -- Agents $i_3$ receives its message late and acts immediately, but out of order. This is an unlucky run of the protocol.
$(c)$ -- Same as the previous run, but $\D$ is chosen to be larger. All agents receive their messages before time $\D$ and act simultaneously.}

\label{sync_response_ex}
\end{figure}

\subsection{Probability of a Correct Run}

If all messages arrive within time $\D$, then the run will be correct. However, if messages arrive late, this does not mean the run is ruined. For example, suppose the message sent to agent $i_k$ arrives late. Then there is some positive probability that all messages to agents $i_{k+1}, \ldots, i_{n}$ also arrive late, and that they will arrive in a linear temporal order.

\begin{theorem}\label{thm_corr_prob}
Fix $\D > 0$, and denote $q \triangleq 1-e^{-\lambda \D}$, then the probability of a correct run of the CORE$(\D)$ protocol is $ \sum_{k=0}^{n} \frac{1}{k!}q^{n-k}(1-q)^k$.
\end{theorem}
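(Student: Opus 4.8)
The plan is to write each agent's action time in closed form and then decompose the correct-run event according to how many trigger messages arrive ``late'', i.e.\ after time~$\D$.

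First I would fix notation: let $X_1,\ldots,X_n$ be the delays of the trigger message from $i_0$ to $i_1,\ldots,i_n$. By the EDD assumption these are mutually independent and each $X_k$ is exponential with parameter $\lambda$. Reading off Algorithm~\ref{synchronous_CORE}, agent $i_k$ acts at time $t_k=\max(X_k,\D)$: if its message arrives before $\D$ it waits and acts at $\D$, otherwise it acts immediately at $X_k$. I will call $i_k$ \emph{on time} if $X_k\le\D$ and \emph{late} if $X_k>\D$, so that $\Pr(X_k\le\D)=q$ and $\Pr(X_k>\D)=1-q$ (the event $X_k=\D$ has probability zero).

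The structural heart of the argument is the observation that in any correct run the late agents must occupy a suffix $i_{n-k+1},\ldots,i_n$ of the ordering. Indeed, every on-time agent acts exactly at $\D$, whereas every late agent acts strictly after $\D$; so if some late agent $i_j$ preceded an on-time agent $i_\ell$ with $j<\ell$, we would get $t_j>\D=t_\ell$, violating the monotonicity $t_1\le\cdots\le t_n$. I would therefore partition the correct-run event by the number $k\in\{0,\ldots,n\}$ of late agents, noting that for each $k$ the unique configuration with any chance of being correct has $i_1,\ldots,i_{n-k}$ on time and $i_{n-k+1},\ldots,i_n$ late. For a fixed $k$ I compute the probability of this configuration being correct as a product of three independent contributions: the $n-k$ on-time agents give $q^{\,n-k}$; the $k$ late agents give $(1-q)^k$ for being late; and, conditioned on all $k$ being late, their arrival times must be increasing, $X_{n-k+1}<\cdots<X_n$. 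The on-time agents impose no further constraint, since they all act at the common time $\D$ and their mutual order is irrelevant. Multiplying these and summing over $k$ produces $\sum_{k=0}^{n}\tfrac{1}{k!}\,q^{\,n-k}(1-q)^k$, as claimed.

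The step I expect to require the most care is the factor $\tfrac{1}{k!}$ coming from the late suffix. Here memorylessness enters: conditioned on $X_j>\D$, the excess $X_j-\D$ is again exponential with parameter $\lambda$, so the conditional arrival times of the late agents are i.i.d.\ (a common shift by $\D$ leaves their relative order unchanged), hence exchangeable. Since ties occur with probability zero, exactly one of the $k!$ strict orderings holds almost surely and each is equally likely, giving probability $\tfrac{1}{k!}$ for the required increasing order. The remaining two factors follow immediately from independence of the $X_k$. As a final check I would confirm the boundary cases: $k=0$ (all on time, all acting at $\D$, contributing $q^n$) and $k=n$ (all late and increasing, contributing $(1-q)^n/n!$) agree with the summand.
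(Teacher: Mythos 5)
Your proposal is correct and follows essentially the same route as the paper's proof: both write $t_k=\max\{\D,e_k\}$, decompose the correct-run event over the $n+1$ ``late agents form a suffix'' configurations (all other configurations being incompatible with correctness), and combine the factor $q^{n-k}(1-q)^k$ with the symmetry-based $\frac{1}{k!}$ for the ordering of the late arrivals. Your explicit justification of the $\frac{1}{k!}$ via memorylessness and exchangeability is a welcome bit of added rigor over the paper's appeal to ``symmetry,'' but it is the same argument.
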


See Appendix \ref{thm_corr_prob_proof} for proof. As can be seen from Theorem \ref{thm_corr_prob}, the probability of a correct run depends both on $\D$ and $n$. Also, it includes runs that are ``lucky", similarly to the one presented in Figure \ref{sync_response_ex} $(a)$. The probability that a run is correct ``by design", i.e., every message arrives by time $\D$, is equal to $q^n = (1-e^{-\lambda \D})^n$ and, in particular, the probability of a correct run tends to $1$ as $\D$ tends to $\infty$, as can be seen from the following Corollary:

\begin{corollary}\label{cor_corr_prob}
The probability of a correct run of the CORE$(\D)$ protocol is at least \newline $(1-e^{-\lambda \D})^n$.
\end{corollary}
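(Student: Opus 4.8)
The plan is to derive this directly from the exact expression in Theorem~\ref{thm_corr_prob}, since the claimed bound $(1-e^{-\lambda\D})^n = q^n$ is itself one of the summands appearing there. First I would recall that the probability of a correct run equals $\sum_{k=0}^{n}\frac{1}{k!}q^{n-k}(1-q)^k$, and observe that by definition $q = 1 - e^{-\lambda\D}$ satisfies $0 < q < 1$ (since $\lambda,\D > 0$ force $0 < e^{-\lambda\D} < 1$). Consequently $1-q = e^{-\lambda\D} \in (0,1)$ as well, so every factor $q^{n-k}$, every factor $(1-q)^k$, and every coefficient $\tfrac{1}{k!}$ is strictly positive; hence each term of the sum is nonnegative.

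The key step is then to isolate the $k=0$ term. Setting $k=0$ yields $\frac{1}{0!}\,q^{n}(1-q)^{0} = q^{n}$, which is precisely the claimed lower bound $(1-e^{-\lambda\D})^n$. Since all the remaining terms (those with $1\le k\le n$) are nonnegative, discarding them can only decrease the sum, giving
\[
\sum_{k=0}^{n}\frac{1}{k!}\,q^{n-k}(1-q)^{k} \;\ge\; \frac{1}{0!}\,q^{n}(1-q)^{0} \;=\; q^{n} \;=\; (1-e^{-\lambda\D})^n,
\]
as required.

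There is essentially no hard part here: the corollary is an immediate weakening of Theorem~\ref{thm_corr_prob}, obtained by retaining only the ``correct by design'' contribution and dropping the lucky-run terms. The only thing worth verifying is that the discarded terms are genuinely nonnegative, which follows from $q\in(0,1)$; this is consistent with the discussion preceding the corollary, where $q^n$ is identified as the probability that every message arrives by time $\D$.
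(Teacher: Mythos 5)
Your proof is correct and follows exactly the same route as the paper: both isolate the $k=0$ term $q^n$ from the sum in Theorem~\ref{thm_corr_prob} and discard the remaining nonnegative terms. Your additional verification that $q\in(0,1)$ (hence nonnegativity of the discarded terms) is a small explicit touch the paper leaves implicit, but the argument is the same.
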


\begin{proof}
From Theorem \ref{thm_corr_prob}, the probability of a correct run of CORE$(\D)$ is given by:
\begin{equation}
    \sum_{k=0}^{n} \frac{1}{k!}q^{n-k}(1-q)^k = q^n + \sum_{k=1}^{n} \frac{1}{k!}q^{n-k}(1-q)^k \geq q^n = (1-e^{-\lambda \D})^n
\end{equation}
\end{proof}
Corollary \ref{cor_corr_prob} characterizes the relationship between the probabilistic communication bound $\D$ and the probability of a correct run of the CORE protocol. We now derive a closed-form expression for the probabilistic communication bound $\D$.

\begin{corollary}\label{Delta}
Fix $0<p<1$, and let $\D \geq \frac{-\ln(1-\sqrt[n]{p})}{\lambda}$. Then the probability of a correct run of the CORE$(\D)$ protocol in a setting with a global clock, is at least $p$.
\end{corollary}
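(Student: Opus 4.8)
The plan is to derive this directly from Corollary~\ref{cor_corr_prob}, which already supplies the lower bound $(1-e^{-\lambda\D})^n$ on the probability of a correct run. The entire argument then reduces to showing that the stated threshold on $\D$ forces this lower bound to be at least~$p$, so the whole proof is a routine monotonicity-plus-algebra computation with no conceptual obstacle.

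First I would invoke Corollary~\ref{cor_corr_prob} to assert that the probability of a correct run of CORE$(\D)$ is at least $(1-e^{-\lambda\D})^n$. It therefore suffices to establish the inequality $(1-e^{-\lambda\D})^n \geq p$ under the hypothesis $\D \geq \frac{-\ln(1-\sqrt[n]{p})}{\lambda}$. Since $0<p<1$ we have $0 < \sqrt[n]{p} < 1$, so the quantity $1-\sqrt[n]{p}$ lies strictly in $(0,1)$; hence $\ln(1-\sqrt[n]{p})$ is negative and the threshold $\frac{-\ln(1-\sqrt[n]{p})}{\lambda}$ is a positive real number, consistent with taking $\D>0$.

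Next I would carry out the algebraic reduction in reverse, chaining equivalences that hold because each transformation applies a strictly monotone map. Starting from the target $(1-e^{-\lambda\D})^n \geq p$, taking $n$-th roots (both sides nonnegative) gives $1-e^{-\lambda\D} \geq \sqrt[n]{p}$, equivalently $e^{-\lambda\D} \leq 1-\sqrt[n]{p}$. Taking natural logarithms, which preserves the inequality, yields $-\lambda\D \leq \ln(1-\sqrt[n]{p})$, and dividing by $-\lambda<0$ (which reverses the inequality) gives exactly $\D \geq \frac{-\ln(1-\sqrt[n]{p})}{\lambda}$. Reading this chain from the bottom up, the hypothesis on $\D$ implies $(1-e^{-\lambda\D})^n \geq p$, as required.

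The one point that warrants care---the closest thing to an obstacle---is bookkeeping the direction of the inequalities, in particular the sign flip when dividing by $-\lambda$ and the fact that $\ln$ is applied to arguments in $(0,1)$ where it takes negative values. Equivalently, one can phrase this as observing that $\D \mapsto (1-e^{-\lambda\D})^n$ is strictly increasing on $(0,\infty)$, so that meeting the threshold value of $\D$ suffices to push the lower bound of Corollary~\ref{cor_corr_prob} up to at least~$p$. I expect no technical difficulty beyond this, since everything follows from the monotonicity of the exponential and logarithm.
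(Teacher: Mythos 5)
Your proposal is correct and takes essentially the same route as the paper: the paper's own proof is the one-line observation that $\D \geq \frac{-\ln(1-\sqrt[n]{p})}{\lambda} \Rightarrow p \leq (1-e^{-\lambda\D})^n$ combined with Corollary~\ref{cor_corr_prob}, and your argument is exactly this implication with the monotonicity/sign-flip bookkeeping spelled out explicitly.
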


\begin{proof}
Follows directly from Corollary \ref{cor_corr_prob}:
$ \quad \D \geq \frac{-\ln(1-\sqrt[n]{p})}{\lambda} \Rightarrow p \leq (1-e^{-\lambda \D})^n$
\end{proof}

\subsection{Expected Response Time}

We are interested in the expected response time of the CORE protocol. Specifically, this is the expected time of the last agent to perform its action. In a given run of the CORE protocol, the response time is determined by the arrival times of the messages from the supervisor's broadcast. They either all arrive by time $\D$, or at least one of them is delayed longer. 

Let $e_1, e_2,\ldots,e_n$ be the set of exponential random variables that are associated with messages sent in the supervisor's broadcast. Let $\M = \max\{e_1, e_2, \ldots, e_n\}$ be a random variable that denotes the time when the last message in a run is delivered. Then the response time of a run of the protocol is given by $RT = \max\{\D, \M\}$.

We are interested in the expected value of the response time $\E[RT]$. However, deriving the expectation of the above non-linear function of a random variable and a constant is a non-trivial exercise. Fortunately, there is a simple bound on $\E[RT]$ that is good enough for our purposes:
\begin{equation}\label{ert_bound}
    \E[RT] = \E[\max\{\D, \M\}] \leq  \D + \E[\M]
\end{equation}
The inequality follows from the fact the maximum is smaller than the sum, and the linearity of probabilistic expectation. Before we proceed in characterizing the expected response time, we prove a useful lemma:
\begin{lemma}\label{lemma_h_n}
$E[\M] = \frac{H_n}{\lambda}$, where $H_n$ is the $n^{\text{th}}$ harmonic number; i.e. $H_n = \sum_{m=1}^n \frac{1}{m}$.
\end{lemma}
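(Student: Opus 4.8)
The plan is to compute $\E[\M]$ directly through the tail formula for a nonnegative random variable, $\E[\M] = \int_0^\infty \Pr(\M > t)\,dt$. The first step is to obtain the distribution of $\M$. Since $\M = \max\{e_1,\dots,e_n\}$ and the $e_j$ are independent exponentials with parameter $\lambda$, the event $\{\M \le t\}$ is exactly the intersection of the $n$ independent events $\{e_j \le t\}$, so $\Pr(\M \le t) = (1-e^{-\lambda t})^n$ and hence $\Pr(\M > t) = 1 - (1-e^{-\lambda t})^n$. Plugging this into the tail formula gives $\E[\M] = \int_0^\infty \bigl(1 - (1-e^{-\lambda t})^n\bigr)\,dt$.

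Next I would evaluate the integral by the substitution $u = e^{-\lambda t}$, which sends $t=0$ to $u=1$ and $t\to\infty$ to $u\to 0$, with $dt = -\tfrac{1}{\lambda}\,\tfrac{du}{u}$. After flipping the limits this turns the expression into $\E[\M] = \tfrac{1}{\lambda}\int_0^1 \frac{1-(1-u)^n}{u}\,du$. The key algebraic observation is the factorization $1-(1-u)^n = u\sum_{k=0}^{n-1}(1-u)^k$ (the geometric-sum identity $1 - x^n = (1-x)\sum_{k=0}^{n-1} x^k$ applied with $x = 1-u$). This cancels the $u$ in the denominator, leaving the integrand equal to $\sum_{k=0}^{n-1}(1-u)^k$, which I integrate term by term: $\int_0^1 (1-u)^k\,du = \tfrac{1}{k+1}$. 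Re-indexing by $m = k+1$ yields $\int_0^1 \frac{1-(1-u)^n}{u}\,du = \sum_{m=1}^n \tfrac{1}{m} = H_n$, and therefore $\E[\M] = \tfrac{H_n}{\lambda}$, as claimed.

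The only subtlety, and the step I would be most careful about, is that the integrand $\frac{1-(1-u)^n}{u}$ is a priori singular at $u=0$; the factorization is precisely what shows this singularity is removable and makes the term-by-term integration legitimate, so this is more a bookkeeping point than a genuine obstacle. As an alternative (and arguably cleaner) route, one can avoid the integral entirely by exploiting the memorylessness of the exponential: the minimum of $n$ i.i.d.\ $\mathrm{Exp}(\lambda)$ variables is $\mathrm{Exp}(n\lambda)$, and once the first variable ``fires'' the remaining $n-1$ are again i.i.d.\ exponential by memorylessness, so $\M$ can be written as a sum of independent spacings $X_k \sim \mathrm{Exp}(k\lambda)$ for $k=1,\dots,n$. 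Linearity of expectation then gives $\E[\M] = \sum_{k=1}^n \tfrac{1}{k\lambda} = \tfrac{H_n}{\lambda}$ immediately. The price of this shortcut is that one must justify the independence and distribution of the spacings carefully, whereas the integral computation is fully self-contained; for the write-up I would lead with the integral argument and perhaps remark on the spacings interpretation as intuition.
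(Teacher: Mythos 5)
Your proposal is correct, but your lead argument takes a genuinely different route from the paper's. The paper proves the lemma via the spacings decomposition that you relegate to a closing remark: it writes $\M = T_n$ as a telescoping sum $T_1 + \sum_{m=1}^{n-1}(T_{m+1}-T_m)$ of order statistics, uses the fact that the minimum of $j$ i.i.d.\ $\mathrm{Exp}(\lambda)$ variables is $\mathrm{Exp}(j\lambda)$ together with memorylessness to conclude $\E[T_{m+1}-T_m] = \frac{1}{(n-m)\lambda}$, and sums expectations --- exactly the shortcut you describe, including the caveat that the distributional claim about the spacings rests on an inductive/memorylessness argument rather than a fully formal justification. Your primary argument instead computes $\E[\M]$ directly from the tail formula: $\Pr(\M \le t) = (1-e^{-\lambda t})^n$ by independence, then the substitution $u = e^{-\lambda t}$ and the geometric-sum identity $1-(1-u)^n = u\sum_{k=0}^{n-1}(1-u)^k$ reduce the integral to $\sum_{m=1}^{n}\frac{1}{m} = H_n$; your handling of the removable singularity at $u=0$ is the right point to flag and is disposed of correctly. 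The trade-off is as you say: your integral computation is fully self-contained, needing only the product CDF and elementary calculus, whereas the paper's spacings argument buys probabilistic insight (and no integration) at the cost of invoking properties of exponential order statistics that it justifies only informally. Either write-up would be acceptable; yours has the minor advantage of not leaving any distributional claim to ``inductive reasoning.''
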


See Appendix \ref{lemma_h_n_proof} for proof. The harmonic numbers roughly approximate the natural logarithm function \cite{conway1998book}, i.e., $H_n \in \Theta(log(n))$. As a result, we gain the following theorem, whose proof can be found in Appendix \ref{sync_core_ert_proof}:

\begin{theorem}\label{sync_core_ert}
Fix $0<p<1$, and let $\D \geq \frac{-\ln(1-\sqrt[n]{p})}{\lambda}$. 
The expected response time of the CORE$(\D)$ protocol is logarithmic in the number of participating agents i.e., $\E[RT] \in \Oof\ \left(\frac{\log(n)}{\lambda}\right)$.
\end{theorem}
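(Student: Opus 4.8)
The plan is to combine the already-established bound \eqref{ert_bound}, namely $\E[RT] \le \D + \E[\M]$, with the value of $\E[\M]$ from Lemma~\ref{lemma_h_n} and the specific choice of $\D$ from the theorem's hypothesis. Substituting both in gives
\begin{equation*}
\E[RT] \;\le\; \frac{-\ln\bigl(1-\sqrt[n]{p}\bigr)}{\lambda} \;+\; \frac{H_n}{\lambda}.
\end{equation*}
Since $H_n \in \Theta(\log n)$, the second summand is already in $\Oof(\log(n)/\lambda)$, so the entire argument reduces to showing that the first summand is also $\Oof(\log(n)/\lambda)$. Thus the real content of the proof is estimating the growth rate of $-\ln(1-\sqrt[n]{p})$ as a function of $n$ (with $p$ a fixed constant).

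The main obstacle is therefore controlling the term $-\ln\bigl(1-p^{1/n}\bigr)$. I would first write $p^{1/n} = e^{(\ln p)/n}$ and set $x \triangleq -(\ln p)/n > 0$, so that $1 - p^{1/n} = 1 - e^{-x}$. As $n\to\infty$ we have $x\to 0$, and the standard estimate $1 - e^{-x} \ge x - x^2/2 \ge \tfrac{x}{2}$ for small enough $x$ gives $1 - p^{1/n} \ge \tfrac{1}{2}\cdot\frac{-\ln p}{n}$ for all sufficiently large $n$. Taking negative logarithms (which reverses the inequality) yields
\begin{equation*}
-\ln\bigl(1-p^{1/n}\bigr) \;\le\; -\ln\!\left(\frac{-\ln p}{2n}\right) \;=\; \ln n + \ln\!\left(\frac{2}{-\ln p}\right).
\end{equation*}
Because $p$ is a fixed constant in $(0,1)$, the quantity $\ln\!\bigl(2/(-\ln p)\bigr)$ is a constant independent of $n$, so the right-hand side is $\ln n + O(1) \in \Oof(\log n)$. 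This is precisely the delicate step, since one must verify that the lower bound $1-p^{1/n}\ge \tfrac12(-\ln p)/n$ holds for large $n$ and that the resulting constant does not secretly depend on $n$.

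Putting the pieces together, both summands of the bound on $\E[RT]$ are $\Oof(\log(n)/\lambda)$, and hence so is their sum. I would conclude by noting that the hypothesis $\D \ge \frac{-\ln(1-\sqrt[n]{p})}{\lambda}$ only lower-bounds $\D$; to obtain the stated upper bound on $\E[RT]$ one takes $\D$ equal to this threshold (choosing a larger $\D$ would only increase the response time while still keeping correctness at least $p$ by Corollary~\ref{Delta}). With $\D$ set to the threshold value, the estimate above delivers $\E[RT] \in \Oof\!\left(\frac{\log(n)}{\lambda}\right)$, completing the proof. A minor point worth checking is the edge behavior for small $n$, where the Taylor estimate for $1-e^{-x}$ may not yet apply; since asymptotic ($\Oof$) notation concerns large $n$, this causes no difficulty.
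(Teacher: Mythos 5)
Your proposal is correct, and its skeleton matches the paper's: both start from the bound $\E[RT] \le \D + \E[\M]$ (Inequality~\eqref{ert_bound}), substitute $\E[\M] = H_n/\lambda$ from Lemma~\ref{lemma_h_n} and the threshold value of $\D$ from Corollary~\ref{Delta}, and reduce everything to showing $-\ln(1-p^{1/n}) \in \Oof(\log n)$. Where you diverge is in how that last estimate is carried out. The paper computes the limit $\lim_{n\to\infty} \frac{-\ln(1-p^{1/n})}{\ln n} = 1$ via Heine's criterion and two applications of L'H\^opital's rule, which establishes the two-sided bound $-\ln(1-p^{1/n}) \in \Theta(\log n)$. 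You instead use the elementary inequality $1-e^{-x} \ge x - x^2/2 \ge x/2$ (valid for $0 \le x \le 1$, i.e., for $n \ge -\ln p$) applied to $x = -(\ln p)/n$, which yields the explicit, non-asymptotic upper bound $-\ln(1-p^{1/n}) \le \ln n + \ln\bigl(2/(-\ln p)\bigr)$ for all sufficiently large $n$. Your route is more elementary and arguably cleaner for the purpose at hand: it avoids differentiation entirely, makes the dependence on $p$ explicit in the additive constant, and directly delivers the $\Oof$ bound the theorem asserts; the paper's limit computation buys the matching lower bound (hence $\Theta$ rather than just $\Oof$), though that extra strength is not needed for the statement. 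You also make explicit a point the paper leaves implicit: since the hypothesis only lower-bounds $\D$, the conclusion should be read with $\D$ set equal to the threshold $\frac{-\ln(1-\sqrt[n]{p})}{\lambda}$, exactly as the paper's substitution tacitly assumes. Both proofs are sound; yours is a legitimate and slightly more self-contained alternative.
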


\subsection{The CORE-Message-Chain Hybrid Protocol}

From Theorem \ref{sync_core_ert}, we see that the CORE protocol's expected response time grows logarithmically in $n$. In comparison, the Message Chain protocol's expected response time grows linearly in $n$. Hence, the CORE protocol is exponentially faster than the Message Chain protocol. However, the Message Chain protocol, while slower for large $n$, guarantees Ordered Response \emph{w.p.} $1$. Nonetheless, a case can be made in favour of the CORE protocol if it can maintain its edge for any $n$, and for any probability of a correct run $p$ that is as close to $1$ as desirable. 

Currently, this is not the case. Recall that $\D$ tends to $\infty$ as $p$ tends to $1$. As a result, for any given $n$, there exists some probability $\bar{p}$ such that for all $\bar{p} < p < 1$ the Message Chain protocol's expected response time $\frac{n}{\lambda}$ is smaller than $\D = \frac{-\ln(1-\sqrt[n]{p})}{\lambda}$. The key insight here is that for small $n$ the Message Chain protocol is very efficient and quick, but for large $n$ it suffers from a significant slowdown in performance. Our solution is to combine the two protocols. Essentially, we now design a protocol that runs both protocols concurrently, in order to benefit from the best of both worlds. 

In this hybrid protocol, the supervisor broadcasts the ``trigger'' message to all agents, and also sends a special message containing the word ``act'' to $i_1$. When $i_1$ receives this message, it starts a message chain that will pass through all agents. Agents that receive a message chain perform their action immediately, pass it on to the next agent in the ordering, and then terminate. Until then, they behave according to the CORE protocol described in Algorithm \ref{synchronous_CORE}.

\subsection{Performance and Probability of Correctness}

Deriving an exact expression for the probability of a correct run in the hybrid protocol is difficult. This is mainly due to the fact that in the hybrid protocol an agent may perform its action for two reasons. The agent may act either because it received a message chain, or because it previously received a message from the supervisor and the the global time is $\D$. It is enough for our purposes to lower bound the probability of a correct run. 
Similarly to the pure CORE protocol, if all messages from the trigger broadcast are delivered by time $\D$, then Ordered Response is guaranteed. From Corollary \ref{cor_corr_prob} we have that:
\begin{theorem}\label{hybrid_corr_prob}
Fix a parameter $\D > 0$. The probability of a correct run in the CORE$(\D)$-Message-Chain protocol is at least $(1-e^{\lambda \D})^n$.
\end{theorem}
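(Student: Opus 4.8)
The plan is to isolate a single ``good event''---that all $n$ messages of the supervisor's trigger broadcast are delivered by time $\D$---show that this event alone forces the hybrid run to be correct, and then lower-bound the probability of a correct run by the probability of this event, which is exactly the quantity already computed in Corollary \ref{cor_corr_prob}. This mirrors the pure CORE analysis, as the text preceding the statement suggests.

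First I would define the event $E$ that every trigger delay $e_1,\ldots,e_n$ satisfies $e_k \le \D$. Since the broadcast delays are i.i.d.\ exponential with parameter $\lambda$ and mutually independent, $\pr{E} = \prod_{k=1}^n \pr{e_k \le \D} = q^n = (1-e^{-\lambda\D})^n$, with $q = 1-e^{-\lambda\D}$ as in Theorem \ref{thm_corr_prob}. This reuses precisely the independence structure that underlies the $q^n$ term lower-bounding the correctness probability in Corollary \ref{cor_corr_prob}.

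The crux is to verify that $E$ implies a correct run \emph{despite} the extra message-chain mechanism, which can make agents act strictly before time $\D$. I would establish \emph{Safety} by a case split on why agent $i_k$ (for $k>1$) performs $\alpha_k$. If $i_k$ acts because it received the message chain at some time $t$, then by construction $i_{k-1}$ forwards the chain only after performing $\alpha_{k-1}$, so $i_{k-1}$ has already acted by time $t$ and Safety holds at this index regardless of the trigger timing. If instead $i_k$ acts via the CORE rule at time exactly $\D$, then on $E$ agent $i_{k-1}$ has received its trigger by $\D$ and therefore also acts no later than $\D$---either earlier through the chain, or simultaneously at $\D$ through CORE---so $i_{k-1}$ does not act after $i_k$. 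In both cases $i_k$ does not precede $i_{k-1}$, so Safety holds for every $k$. \emph{Liveness} on $E$ is immediate, since each agent has received a trigger and hence acts no later than $\D<\infty$. Thus $E$ is contained in the set of correct runs.

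Finally I would combine the two steps by monotonicity of probability: the probability of a correct run satisfies $p = \pr{t_1\le\cdots\le t_n<\infty} \ge \pr{E} = (1-e^{-\lambda\D})^n$, the claimed bound. The main obstacle is the Safety case analysis: one must confirm that interleaving the chain (which may fire before $\D$) with the synchronized CORE actions at $\D$ never produces an inversion, and in particular that permitting $i_{k-1}$ and $i_k$ to act simultaneously at $\D$ is consistent with the $\le$ form of the Safety specification. Everything else follows from the independence of the broadcast delays and the computation already recorded in Corollary \ref{cor_corr_prob}.
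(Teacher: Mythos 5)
Your proposal is correct and takes essentially the same route as the paper: the paper, too, observes that the single event ``all trigger-broadcast messages arrive by time $\D$'' forces a correct run of the hybrid protocol and then invokes Corollary \ref{cor_corr_prob} for its probability $q^n=(1-e^{-\lambda\D})^n$, leaving implicit the Safety case analysis (chain-induced action vs.\ CORE action at time $\D$) that you spell out. Note that the bound you derive, $(1-e^{-\lambda\D})^n$, is the intended one; the missing minus sign in the exponent of the theorem statement is a typo in the paper.
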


We focus the reader's attention on two important observations. The first is that, in terms of expected response time, the hybrid protocol is no worse than the Message Chain protocol in every run. Once an agent receives a message chain, the agent immediately performs its action and terminates. The second observation is that, from some value of $n$, the expected response time stops growing linearly in $n$, and starts growing logarithmically in $n$. Thus, it holds for the hybrid protocol that $\E[RT] \in  \Oof\ \left(\frac{\log(n)}{\lambda}\right)$.

\begin{figure}
    \centering
    \includegraphics[width=0.7\textwidth]{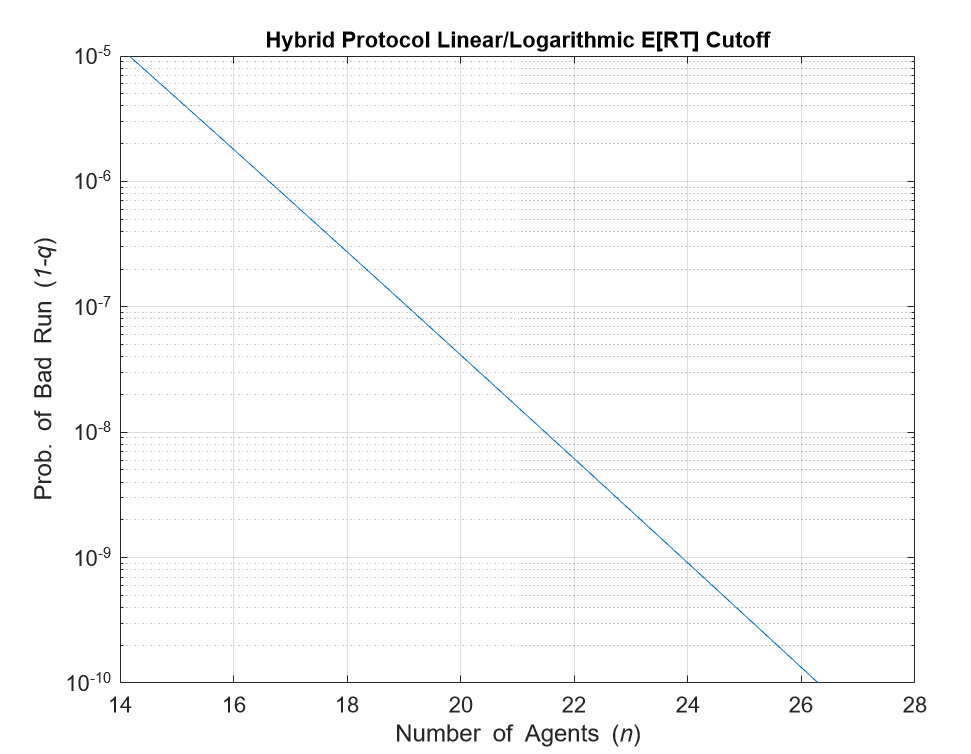}
    \captionsetup{justification=centering}
    \caption{Plot of the Linear/Logarithmic expected response time of the CORE-Message-Chain protocol.}
    \vspace{-1em}
    \label{fig_cutoff}
\end{figure}

With high probability, the message chain is automatically truncated once time $\D$ is reached, since no agent propagates the message chain after termination. For small $n$, and large $\D$, the message chain will more likely terminate the agents before time $\D$, which yields performance linear in $n$. However, when $n$ is large, the message chain will be much slower. In this case, the CORE part of the hybrid protocol dominates, and the expected response time will be logarithmic in $n$.

In Figure \ref{fig_cutoff}, given some probability of an incorrect run $1-p$, we plot for which $n$ it holds that $\frac{n}{\lambda} = \D$. The plot illustrates the linear/logarithmic cutoff point of the hybrid protocol, i.e., for a given probability of a correct run $p<1$, what is the minimum number of participating agents $N$ required such that for all $n>N$ the expected response time stops growing linearly in $n$, and starts growing logarithmically in $n$.

Figure \ref{fig_cutoff} shows that the hybrid protocol exhibits linear performance only when $n$ is small. For example, if we allow the rate of incorrect runs to be $1$ in $10^6$ (i.e., $p = 1-10^{-6}$), then a lower bound on the number of participating agents needed for the hybrid protocol to exhibit logarithmic performance is roughly $n=18$. For $p = 1-10^{-9}$, the lower bound is $n = 24$. As can be seen from the slope of the graph, the hybrid protocol requires an additional $3$ participating agents for a tenfold improvement in the probability of correctness. Hence, correctness probability of $p = 1 - 10^{-100}$, would require roughly $n = 300$ agents.
\newline

\section{CORE without a Global Clock}\label{without_global_clock}

In Section \ref{with_global_clock}, we presented a concurrent Ordered Response protocol for the EDD model when we can assume the existence of an accurate global clock. However, agents are only assumed to be able to measure time accurately using their own local clocks, though they may be initially offset relative to each other by some unknown quantity. 

We now consider the case where agents do not share a global clock, and so they initially have no idea how early or late they are relative to other agents. Previously, we assumed that the environment's external input would arrive at a point in time that we could denote as ``$t=0$", and that when an agent becomes active it would have access to the current global time. However, in the current model, the arrival of the initial external input is only observed by the supervisor, and the knowledge of when it arrived cannot be disseminated accurately to all the other agents.

This is a major challenge to the design of a concurrent Ordered Response protocol. To overcome this, we again leverage the probability distribution on message delays, to \emph{probably approximately} synchronize the local clocks of the agents. Specifically, the worker agents will attempt to approximate when the external input has arrived based on the arrival times of messages broadcast by the supervisor and synchronize their local clocks to fit this hypothesis. 

\subsection{Probably Approximately Synchronized Local Clocks}\label{sync_local_clocks_sec}

For ease of exposition, from this point onward assume that in addition to the supervisor agent $i_0$, there are $n+1$ worker agents named $i_1, i_2, \ldots, i_{n+1}$. As before, the supervisor agent $i_0$ broadcasts a ``trigger" message to all the other agents. When a worker agent receives a trigger message it rebroadcasts it to all of its coworkers, with the message ``redirect". Concurrently, it waits for $n$ such ``redirect" messages, and logs a timestamp of the arrival time of each such message according to its local clock. When the final message arrives, the agent proceeds to calculate its hypothesis of when the external input arrived based on the observed timestamps.

Agent $i_k$'s hypothesis $T_k$ is the mean of the measured timestamps $\{\tau_1, \tau_2, \ldots, \tau_n\}$ minus the expected delay of a message chain of length $2$, i.e. $T_k = \frac{1}{n}\sum_{m=1}^n \tau_m -\frac{2}{\lambda}$. Notice that the sequence of timestamps are \emph{i.i.d.} Erlang\footnote{An Erlang random variable is the probabilistic distribution of a sum of \emph{i.i.d.} exponential random variables (see \cite{erlang_dist}).} random variables with parameters $2$ and $\lambda$. By the Law of Large Numbers, as $n \to \infty$ their mean converges to their expected value: $\frac{2}{\lambda}$. This means that $T_k \xrightarrow[]{n \to \infty} 0$, or, in this case, it converges to the true arrival time of the external input to the system. 

The value $T_k$ is used by agent $i_k$ to offset its local clock. Formally, we denote by $C_{i_k}(t)$ the time that the local clock of agent $i_k$ shows at time $t$. We assume that $C_{i_0}(t) = t$, and that the external input arrives at $t=0$. When an agent $i_k$ hypothesizes that the external input arrived at time $T_k$, it sets an adjusted local clock $C_{i_k}^{Adj}(t)$ to be the value of $C_{i_k}(t)$ offset by $T_k$. I.e., for all $t>0$, agent $i_k$'s adjusted local clock shows that $C_{i_k}^{Adj}(t) = C_{i_k}(t) - T_k$.

Since the number of agents in the system is finite, the agents' local clocks are still offset relative to the supervisor's local clock. However, when the aforementioned procedure concludes, the relative offset between agents' local clocks can be bounded with some probability.

\begin{theorem}\label{delta_bound_thm}
Let $\tau_m \sim Erlang(2,\lambda)$ for $m=1,\ldots,n$, and let $T_k = \frac{1}{n}\sum_{m=1}^n \tau_m - \frac{2}{\lambda}$. Then:
\begin{equation}
   \pr{\max_{k \geq 1} \abs{T_k} \leq \frac{\ln(n)}{\lambda \sqrt{n} }} \geq \Psi(n)
\end{equation}
where:
\begin{equation}
\Psi(n) \triangleq \max \brs{0, \ 1 - \prs{1-\frac{\ln(n)}{2\sqrt{n}}}^{2n}n^{1+\sqrt{n}} - \prs{1+\frac{\ln(n)}{2\sqrt{n}}}^{2n}n^{1-\sqrt{n}}}.
\end{equation} 
In particular, $\Psi(n) \xrightarrow{n \to \infty} 1 $.
\end{theorem}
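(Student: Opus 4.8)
The plan is to treat each agent's estimate separately, bound its tail with a Chernoff argument, and combine the agents by a union bound; the closed form of $\Psi(n)$ will fall out of the Chernoff exponents. First I would note that for a fixed agent $i_k$ the $n$ timestamps are independent $\mathrm{Erlang}(2,\lambda)$ variables, so their sum $S_k \triangleq \sum_{m=1}^n \tau_m$ is a sum of $2n$ i.i.d.\ exponentials, i.e.\ $S_k \sim \mathrm{Gamma}(2n,\lambda)$ with mean $\mu \triangleq 2n/\lambda$. Writing $\delta \triangleq \tfrac{\ln n}{2\sqrt n}$, the threshold satisfies $\tfrac{\ln n}{\lambda\sqrt n} = \delta\cdot\tfrac{2}{\lambda}$, so the bad event $\abs{T_k} > \tfrac{\ln n}{\lambda\sqrt n}$ is exactly $\abs{S_k - \mu} > \delta\mu$, i.e.\ $S_k \notin \big((1-\delta)\mu,\,(1+\delta)\mu\big)$. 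This reframes the whole statement as a relative deviation of a Gamma variable from its mean.

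Next I would bound each tail by the exponential Markov inequality using the Gamma moment generating function $\E[e^{sS_k}] = (1-s/\lambda)^{-2n}$ (valid for $s<\lambda$). For the upper tail, minimizing $e^{-s(1+\delta)\mu}(1-s/\lambda)^{-2n}$ over $s$ gives the optimizer $s/\lambda = \delta/(1+\delta)$ and the clean bound $\Pr\big(S_k \ge (1+\delta)\mu\big) \le \big[(1+\delta)e^{-\delta}\big]^{2n} = \big(1+\tfrac{\ln n}{2\sqrt n}\big)^{2n} n^{-\sqrt n}$, where the last equality uses $e^{\pm 2n\delta} = n^{\pm\sqrt n}$. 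Symmetrically, using $\E[e^{-sS_k}] = (1+s/\lambda)^{-2n}$ and optimizer $s/\lambda = \delta/(1-\delta)$, the lower tail gives $\Pr\big(S_k \le (1-\delta)\mu\big) \le \big[(1-\delta)e^{\delta}\big]^{2n} = \big(1-\tfrac{\ln n}{2\sqrt n}\big)^{2n} n^{\sqrt n}$. Their sum is the per-agent failure bound, and it already exhibits the two factors appearing in $\Psi(n)$, save for the leading power of $n$.

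I would then apply a union bound over the worker agents, so that the probability that \emph{every} $\abs{T_k}$ lies within the threshold is at least $1$ minus the agent count times the per-agent failure bound; this contributes the extra factor $n$, turning $n^{\pm\sqrt n}$ into $n^{1\pm\sqrt n}$ and yielding precisely $1 - \big(1-\tfrac{\ln n}{2\sqrt n}\big)^{2n}n^{1+\sqrt n} - \big(1+\tfrac{\ln n}{2\sqrt n}\big)^{2n}n^{1-\sqrt n}$. The outer $\max\brs{0,\cdot}$ is cosmetic: it records that a probability is never negative, so when the estimate exceeds $1$ (small $n$) the lower bound is the vacuous $0$.

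Finally, for $\Psi(n)\to 1$ I would show both subtracted terms vanish. Taking logarithms and expanding $\ln(1\pm\delta) = \pm\delta - \tfrac{\delta^2}{2} + O(\delta^3)$, the $\pm\sqrt n\ln n$ contributions of $2n\ln(1\pm\delta)$ cancel exactly against the $\mp\sqrt n\ln n$ coming from $n^{1\pm\sqrt n}$, leaving $\log(\text{term}) = \ln n - \tfrac{(\ln n)^2}{4} + o(1) \to -\infty$. \emph{This cancellation is the crux and the main obstacle}: the first-order terms are of size $\sqrt n\ln n$ and must annihilate exactly, so the limit is governed by the second-order term $-\tfrac{(\ln n)^2}{4}$, which beats the union-bound polynomial factor $n$ precisely because the threshold $\tfrac{\ln n}{\lambda\sqrt n}$ was calibrated to sit at $\Theta(\ln n)$ standard deviations of $T_k$. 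Hence each term tends to $0$ and $\Psi(n)\to 1$.
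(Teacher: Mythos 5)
Your proposal is correct and follows essentially the same route as the paper's proof: a per-agent Chernoff bound on the Erlang/Gamma tail (with the same optimizers, $s/\lambda=\delta/(1\pm\delta)$ after rescaling, and the same resulting bounds $\prs{1\mp\frac{\ln n}{2\sqrt n}}^{2n}n^{\pm\sqrt n}$), a union bound over the $n$ agents contributing the extra factor of $n$, and a second-order Taylor expansion of $\ln(1\pm\delta)$ to show both terms vanish. The only differences are cosmetic — you work with the sum $S_k\sim\mathrm{Gamma}(2n,\lambda)$ instead of the sample mean $Z_k\sim\mathrm{Erlang}(2n,n\lambda)$, and you phrase the limit via asymptotic cancellation rather than the paper's explicit Taylor inequalities.
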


 See Appendix \ref{delta_bound_thm_proof} for proof. Denote $\delta \triangleq \frac{\ln(n)}{\lambda \sqrt{n}}$. Then by Theorem \ref{delta_bound_thm}, with probability at least $\Psi(n)$, the local clocks of all agents are \emph{$\delta$-synchronized}: 

\begin{definition}
If for all $k,m\geq 1$ it holds that $\abs{C_{i_k}^{Adj}(t) - C_{i_m}^{Adj}(t)} \leq \delta$, we say that the agents' local clocks are \emph{$\delta$-synchronized}. 
\end{definition}

As can be seen from Theorem \ref{delta_bound_thm}, the probability that the agents' local clocks will be $\delta$-synchronized converges to $1$ as $n \to \infty$. However, for $n < 115$, ${\Psi(n)} = 0$. This is due to the fact that in the proof of Theorem \ref{delta_bound_thm}, we use the Union Bound Theorem, which results in an overly course lower bound. As a result, for these values of $n$, the probabilistic bound given in Theorem \ref{delta_bound_thm} is not very informative. To see the actual probability of $\delta$-synchronization for small $n$, we have performed a small experiment. We sampled the vector $(T_1, T_2, \ldots, T_n)$ a thousand times and calculated the empirical mean over the binary results of whether the vector elements are at most $2\delta$ far from each other. As can be seen in Figure \ref{fig_prob_delta_sync}, the real probability of $\delta$-synchronization is much higher, and for $n > 400$ is nearly $1$. 

\subsection{The PA-CORE Protocol}

In a setting without a global clock, we design a two phase protocol that we call PA-CORE for \emph{Probable Approximate} Concurrent Ordered Response. The first phase is the $\delta$-synchronization of the agents' local clocks using the procedure described in Section \ref{sync_local_clocks_sec}. In the second phase, the agents use this to perform their actions in a linear temporal order. The pusdeocode of the protocol is presented in Algorithm \ref{asynchronous_CORE}. 

\begin{algorithm}
\caption{The PA-CORE($\D$) Protocol}
\label{asynchronous_CORE}
\begin{algorithmic}[1]
\Procedure{Protocol for Agent $i_0$}{}
    \InputEvent
        \State{send ``\textbf{trigger}" to all}
    \EndInputEvent
\EndProcedure
\newline
\Procedure{Protocol for Agent $i_k$}{}
    \State{\textbf{Setup}:}
    \State{\quad $m \leftarrow 0$}
    \newline
    \Event{``\textbf{trigger}"}
        \State{send ``\textbf{redirect}" to all}
    \EndEvent
    \newline
    \Event{``\textbf{redirect}"}
        \State{$m \leftarrow m+1$}
        \State{$\tau_m \leftarrow C_{i_k}(t)$} \Comment{Measure and record current time.}
        \If{$m = n$}
            \State{$T \leftarrow \frac{1}{n}\sum_{m=1}^n \tau_m -\frac{2}{\lambda}$}
            \State{$C_{i_k}^{Adj}(t) \leftarrow C_{i_k}(t) - T$}
            \Comment{Set adjusted local clock.}
        \State{Wait until $C_{i_k}^{Adj}(t) \geq \D +2\delta\cdot (k-1)$}
        \State{perform $\alpha_k$}
        \EndIf
    \EndEvent
\EndProcedure
\end{algorithmic}
\end{algorithm}

At time $t=0$, agent $i_0$ broadcasts a ``trigger" message to all other agents. When a worker agent $i_k$ receives a ``trigger" message, it broadcasts a message with the word ``redirect" to all worker agents. Upon receiving a ``redirect" message, the agent records a timestamp according to its local clock. After $n$ such ``redirect" messages have been received, the agent approximates the arrival time of the external input and offsets its local clock to match this hypothesis. 

\begin{figure}[t]
\centering
\includegraphics[width=0.7\textwidth]{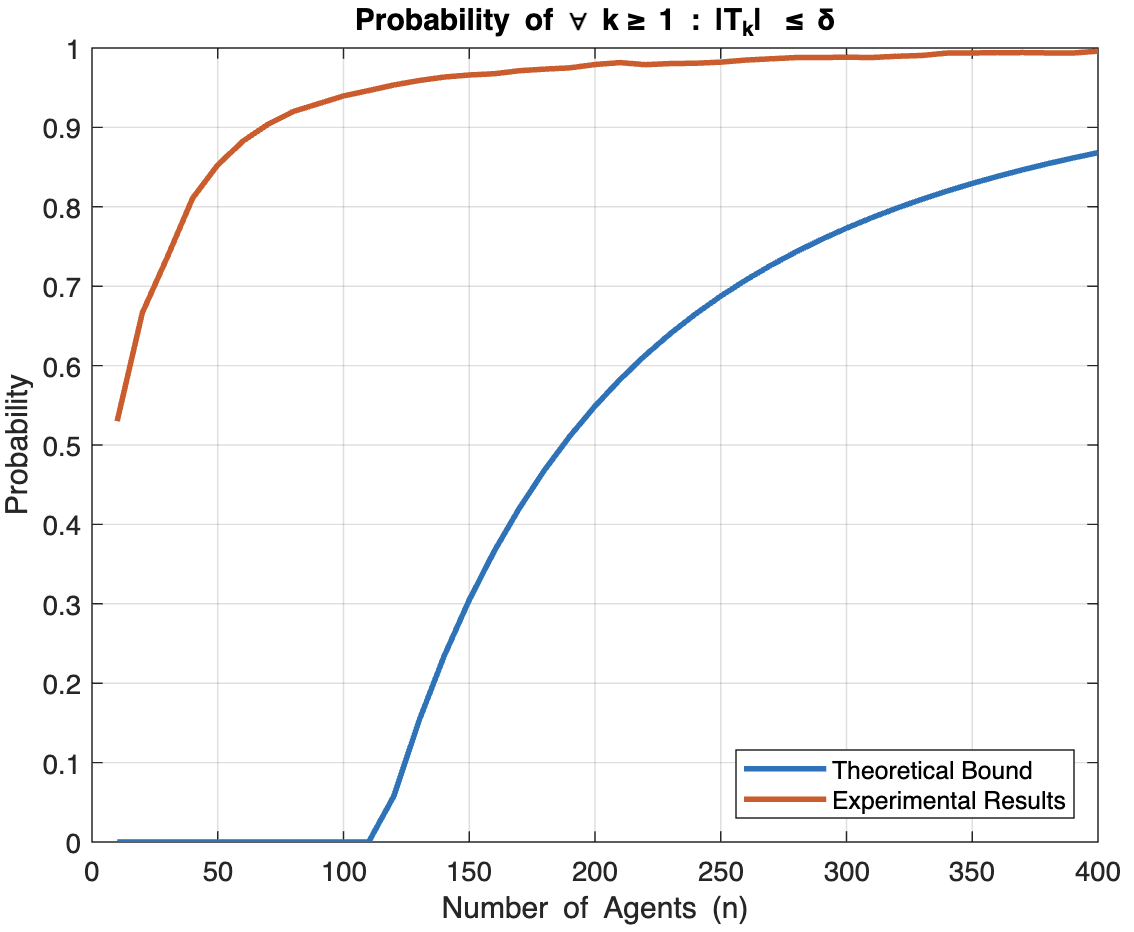}
\captionsetup{justification=centering}
\caption{Probability that clocks are $\delta$-synchronized. }
\label{fig_prob_delta_sync}
\vspace{-1.5em}
\end{figure}

Now, the agent $i_k$ moves into phase $2$. Being the $k$-th agent in the order, the agent waits until time $\D + 2\delta\cdot  (k-1)$, and then performs its action. If the local agent's time is already past this point when it receives the last ``redirect" message, it performs its action immediately. The additional delay of $2\delta\cdot (k-1)$ for agent $i_k$ ensures that agents act in non-overlapping time windows.

Notice that if $n$ was infinite, then $\delta = 0$, and the PA-CORE protocol would closely mirror the CORE protocol in Section \ref{with_global_clock}. Intuitively, the first phase of the protocol attempts to $\delta$-synchronize the agents' local clocks with high probability. However, $\delta$-synchronization is not a global clock, and so agents cannot act simultaneously. Instead, in phase 2 of the protocol, the agents wait an index-dependent time delay that guarantees that they act in non-overlapping windows. 

In contrast to the Message Chain protocol, the expected delay between the acting times of two consecutive agents is not a constant. In PA-CORE, this delay is at most $2\delta$ \emph{w.h.p.} (assuming that the agents' local clocks are $\delta$-synchronized). Recall that $\delta \triangleq \frac{\ln(n)}{\lambda \sqrt{n}}$. This value was chosen because the product $n\delta$ is in $\Oof\prs{\frac{\sqrt{n}\log(n)}{\lambda}}$ which we will show results in the expected response time of the protocol to grow sub-linearly in $n$.

\subsection{Probability of a Correct Run}

\begin{lemma}\label{prob_corr_given_lemma}
In a run of the PA-CORE$(\D)$ protocol, if all of the messages arrive by time $\D$ and the agents' local clocks are $\delta$-synchronized, then the run is guaranteed to be correct. 
\end{lemma}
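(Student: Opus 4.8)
The plan is to verify that the two hypotheses---every message delivered by real time $\D$, and $\delta$-synchronized adjusted clocks---force both \emph{Liveness} and \emph{Safety} to hold deterministically. Throughout, I would measure time on the supervisor's (true) scale, so that the external input arrives at $t=0$ and $C_{i_0}(t)=t$. Each worker clock advances at the true rate but with an unknown additive offset, so after the setup phase its adjusted clock has the form $C_{i_k}^{Adj}(t)=t+\phi_k$ for a constant error $\phi_k:=C_{i_k}^{Adj}(t)-t$. The operative consequence of $\delta$-synchronization (as obtained in Theorem~\ref{delta_bound_thm}, which bounds $|T_k|\le\delta$) is that $|\phi_k|\le\delta$ for every $k$, so any two adjusted clocks differ by at most $2\delta$. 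Denote by $t_k^{\mathrm{fin}}$ the real time at which $i_k$ receives its $n$-th ``redirect'' and by $t_k$ the real time at which it performs $\alpha_k$.

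\emph{Liveness} is immediate. Since all ``trigger'' and ``redirect'' messages are delivered by time $\D<\infty$, every worker has received its $n$ ``redirect'' messages, computed $T_k$, and set its adjusted clock by time $\D$; in particular $t_k^{\mathrm{fin}}\le\D$. After that each agent waits only a finite amount (or acts at once), so every $i_k$ performs $\alpha_k$ in finite time.

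For \emph{Safety} I would show $t_{k-1}\le t_k$ for every $k>1$. By the protocol, $i_k$ acts at the first real time that is at least $t_k^{\mathrm{fin}}$ and at which $C_{i_k}^{Adj}(t)\ge\D+2\delta\,(k-1)$; equivalently $t_k=\max\{\,t_k^{\mathrm{fin}},\ \D+2\delta(k-1)-\phi_k\,\}$. The first step is to observe that every $k\ge 2$ agent in fact \emph{waits}: at its setup completion its adjusted clock reads $t_k^{\mathrm{fin}}+\phi_k\le\D+\delta$, strictly below its threshold $\D+2\delta(k-1)\ge\D+2\delta$, so $t_k=\D+2\delta(k-1)-\phi_k$. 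For consecutive indices with $k-1\ge 2$ this yields
\begin{equation*}
t_k-t_{k-1}=2\delta-(\phi_k-\phi_{k-1})\ \ge\ 2\delta-2\delta\ =\ 0,
\end{equation*}
using $|\phi_k-\phi_{k-1}|\le 2\delta$. The remaining boundary case is $k=2$, where $i_1$ (threshold $\D$) may instead act immediately: there $t_1\le\max\{\D,\ \D-\phi_1\}\le\D+\delta$, whereas $t_2\ge\D+2\delta-\phi_2\ge\D+\delta$, so again $t_1\le t_2$. Hence the acting times are non-decreasing and \emph{Safety} holds.

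The only genuinely delicate point is the interaction between the ``act immediately'' branch of the protocol and the threshold spacing: an agent whose setup completes after its threshold skips the wait and could, a priori, fire out of turn. The key that closes this gap is that consecutive thresholds are spaced by exactly $2\delta$---matching the maximal relative skew $2\delta$ permitted under $\delta$-synchronization---while every threshold for $k\ge 2$ sits at or above $\D+2\delta$ and no agent can complete its setup later than $\D$. I would therefore take care to invoke the individual bound $|\phi_k|\le\delta$ rather than only the pairwise bound in the definition, since it is precisely what prevents a late-finishing agent $i_k$ (for $k\ge 2$) from acting ahead of $i_{k-1}$.
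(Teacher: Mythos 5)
Your proof is correct and is essentially the paper's own argument: the paper phrases it as an induction on the number of workers, but its base case and inductive step amount to exactly your computation---agent $i_k$'s acting time is pinned within $\pm\delta$ of its threshold $\D + 2\delta\,(k-1)$, so the $2\delta$ spacing between consecutive thresholds absorbs the maximal clock skew. Your explicit care in invoking the individual bound $|\phi_k|\le\delta$ from Theorem~\ref{delta_bound_thm} (deviation from the supervisor's true time), rather than only the pairwise bound in the definition of $\delta$-synchronization, matches what the paper's proof implicitly does as well (e.g., its use of $C_{i_1}^{Adj}(\D) \leq \D + \delta$), and is indeed necessary, since a pairwise-only guarantee with a large common offset would let late-threshold agents take the ``act immediately'' branch and fire out of order.
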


\begin{proof}
We prove the claim by induction on the number of worker agents $n+1$.

\underline{Base Case}: ($n=1$) Since all messages arrive by time $\D$, and the agents' local clocks are $\delta$-synchronized, we have that $C_{i_1}^{Adj}(\D) \leq \D + \delta$. Therefore, agent $i_1$ acts no later than time $\D + \delta$. Also, for agent $i_2$ we have that $C_{i_2}^{Adj}(\D) \geq \D - \delta$. Since agent $i_2$ waits an additional $2\delta$ before acting, agent $i_2$ acts no earlier than time $\D + \delta$. Thus, agents $i_1$ and $i_2$ act in the proper linear temporal order.
\underline{Step}: Suppose that the claim holds for all $k \leq n$. By the same logic as in the base case, agent $i_{n+1}$ acts only after agent $i_n$, and by the induction hypothesis, agent $i_{n}$ also acts only after all the previous agents in the ordering have acted. 
\end{proof}

\begin{theorem}\label{prob_corr_thm}
Fix $\D > 0$. The probability of a correct run of the PA-CORE$(\D)$ protocol is at least $\Psi(n)\cdot (1-e^{-0.5\lambda \D})^{(n+1)^2}$.
\end{theorem}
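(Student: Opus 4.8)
The plan is to derive the bound from the two sufficient conditions isolated in Lemma~\ref{prob_corr_given_lemma}. Let $A$ be the event that every message sent in the run is delivered by real time~$\D$, and let $B$ be the event that the agents' local clocks end up $\delta$-synchronized. Lemma~\ref{prob_corr_given_lemma} gives that $A\cap B$ implies a correct run, so it suffices to lower-bound $\pr{A\cap B}$ and to extract the two advertised factors: $(1-e^{-0.5\lambda\D})^{(n+1)^2}$ from $A$ and $\Psi(n)$ from $B$.

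First I would handle the timeliness event~$A$ by counting messages and passing to a convenient sub-event. The supervisor sends $n+1$ ``trigger'' messages, and each of the $n+1$ workers broadcasts a ``redirect'' to the other $n$ workers, for a total of $(n+1)+(n+1)n=(n+1)^2$ messages, whose delays are \emph{i.i.d.}\ exponential with parameter $\lambda$. Writing $g_j$ for the trigger delay reaching $i_j$ and $r_{j,k}$ for the redirect delay from $i_j$ to $i_k$, the real arrival time at $i_k$ of the redirect originating at $i_j$ is the sum $g_j+r_{j,k}$ of two such delays. Hence, if I let $A'$ be the event that every one of the $(n+1)^2$ individual delays is at most $\D/2$, then under $A'$ each redirect arrives by $\D/2+\D/2=\D$ and each trigger by $\D/2\le\D$, so $A'\subseteq A$. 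Since $\pr{X\le\D/2}=1-e^{-0.5\lambda\D}$ for a single delay $X$, independence of the delays gives $\pr{A'}=(1-e^{-0.5\lambda\D})^{(n+1)^2}$.

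For the synchronization event I would invoke Theorem~\ref{delta_bound_thm} directly: with $\delta\triangleq\frac{\ln(n)}{\lambda\sqrt n}$, it yields $\pr{B}\ge\Psi(n)$. Combining the two estimates multiplicatively then produces the claimed lower bound $\Psi(n)\cdot(1-e^{-0.5\lambda\D})^{(n+1)^2}$ on $\pr{A\cap B}$, and hence on the probability of a correct run.

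The hard part is precisely this last multiplicative combination. The events $A'$ and $B$ are both measurable functions of the same $(n+1)^2$ delays---indeed the sums $g_j+r_{j,k}$ whose magnitudes $A'$ controls are exactly the timestamps whose empirical means determine the estimates $T_k$, and hence $B$---so they are not independent, and a product of marginal bounds is not automatically a valid lower bound on the joint probability (conditioning on all delays being small biases the timestamp means downward, which for small $\D$ works against the two-sided synchronization event). To make the step rigorous I would either argue the conditional bound $\pr{B\mid A'}\ge\Psi(n)$ by re-running the Chernoff-and-union-bound estimate behind Theorem~\ref{delta_bound_thm} under the truncation imposed by $A'$, or replace the product by an appropriate correlation inequality; the clean factored form in the statement reflects treating the timeliness and synchronization guarantees as decoupled.
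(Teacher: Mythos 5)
Your proposal follows the paper's proof essentially step for step. The paper also reduces to Lemma~\ref{prob_corr_given_lemma}; its handling of the timeliness event (replacing each two-hop arrival time $e_k+M_k$ by $2\max\{e_k,M_k\}$, then using independence across senders and the CDF of a maximum of \emph{i.i.d.}\ exponentials) amounts to lower-bounding $\Pr(E_\D)$ by the probability of exactly your sub-event $A'$ that all $(n+1)^2$ individual delays are at most $\D/2$, giving the same factor $(1-e^{-0.5\lambda\D})^{(n+1)^2}$; it then multiplies by $\Psi(n)$ from Theorem~\ref{delta_bound_thm}.

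The only divergence is at the step you flag as the hard part, and there your caution is vindicated: the paper performs the combination by writing $\Pr(E_\delta \wedge E_\D) = \Pr(E_\delta \mid E_\D)\Pr(E_\D) \geq \Psi(n)\cdot\p$, i.e., it silently assumes $\Pr(E_\delta \mid E_\D) \geq \Psi(n)$, which the unconditional Theorem~\ref{delta_bound_thm} does not supply. Moreover, the dependence is not ignorable: under $E_\D$ every recorded timestamp (a two-hop arrival time) is at most $\D$, hence every $T_k \leq \D - \frac{2}{\lambda}$, so whenever $0 < \D < \frac{2}{\lambda} - \delta$ the joint event $E_\delta \wedge E_\D$ is empty while $\Psi(n)\cdot\p$ is strictly positive for $n \geq 115$. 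So the factored lower bound on the joint probability is genuinely false for small $\D$, and neither of your proposed repairs can rescue it in that regime (the conditional Chernoff estimate fails for the same reason, and FKG-type correlation inequalities do not apply to the two-sided event $E_\delta$); the theorem's conclusion might still hold there, but only by a different argument, e.g., one that counts ``lucky'' out-of-window runs as in Theorem~\ref{thm_corr_prob}. In the regime the protocol actually operates in, where Corollary~\ref{delta_tp_cor} sets $\D \in \Theta(\log(n)/\lambda) \gg 2/\lambda$, your first suggested repair---re-running the Chernoff and union-bound estimates for the $T_k$'s under the truncation imposed by $A'$---is the natural way to close the gap. In short, you reproduced the paper's argument, and where you stopped short you identified a real hole in the paper's own proof rather than leaving one of your own.
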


See Appendix \ref{prob_corr_thm_proof} for proof.
Denote by $\p$ the probability that all messages arrive by time $\D$. By Theorem \ref{prob_corr_thm}, $\p = (1-e^{- 0.5\lambda \D})^{(n+1)^2}$, and the probability of a correct run is lower bounded by $\Psi(n)\cdot \p$, where $\Psi(n)$ is as defined in Theorem \ref{delta_bound_thm}. The probability $\p$ depends on the value of $\D$, and as $\D$ is increased, $\p$ draws closer to $1$. 

The lower bound we have derived is a product of two probabilities, $\Psi(n)$ and $\p$. While the latter is controllable by $\D$, the former is static, since it is solely dependent on $n$. From this point onward, we focus our analysis for large $n$, and we assume that $\Psi(n) \approx 1$, as the experimental results that are illustrated in Figure \ref{fig_prob_delta_sync}, suggest. We leave the proof of a tighter lower bound for future work.

\begin{corollary}\label{delta_tp_cor}
Fix $0<p<1$, and let $\D \geq \frac{-2\ln(1-\sqrt[(n+1)^2]{p})}{\lambda}$. Then, the probability of a correct run of the PA-CORE$(\D)$ protocol is at least $p$.
\end{corollary}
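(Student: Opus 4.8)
The plan is to obtain this corollary as a direct consequence of Theorem~\ref{prob_corr_thm} together with the approximation $\Psi(n)\approx 1$ that was adopted in the preceding discussion for large~$n$. Under that approximation, the lower bound on the probability of a correct run reduces to $\p = \prs{1-e^{-0.5\lambda\D}}^{(n+1)^2}$, so it suffices to guarantee $\p \geq p$ and then read off the required condition on~$\D$. The whole argument is thus a chain of monotone algebraic manipulations, in the same spirit as the proof of Corollary~\ref{Delta}.

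First I would impose $\prs{1-e^{-0.5\lambda\D}}^{(n+1)^2} \geq p$ and take the $(n+1)^2$-th root of both sides; this preserves the inequality since both sides are nonnegative. The result is $1-e^{-0.5\lambda\D} \geq \sqrt[(n+1)^2]{p}$, equivalently $e^{-0.5\lambda\D} \leq 1-\sqrt[(n+1)^2]{p}$. Next I would take the natural logarithm of both sides and multiply through by $-\tfrac{2}{\lambda}$, which flips the inequality and yields exactly $\D \geq \frac{-2\ln\prs{1-\sqrt[(n+1)^2]{p}}}{\lambda}$, matching the hypothesis. Reversing these implications shows that the stated lower bound on~$\D$ forces $\p\geq p$, and hence that the probability of a correct run is at least~$p$.

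The only point requiring any care is that the logarithm be well defined: since $0<p<1$ forces $\sqrt[(n+1)^2]{p}<1$, the quantity $1-\sqrt[(n+1)^2]{p}$ is strictly positive, so $\ln\prs{1-\sqrt[(n+1)^2]{p}}$ exists (and is negative, which is why the final threshold on~$\D$ is positive). This is a triviality rather than a genuine obstacle. Structurally the derivation is the proof of Corollary~\ref{Delta} verbatim, with $n$ replaced by $(n+1)^2$ and $\lambda$ by $0.5\lambda$, reflecting that PA-CORE involves on the order of $(n+1)^2$ messages each with the effective halved rate appearing in Theorem~\ref{prob_corr_thm}.

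I do not expect a mathematical difficulty at any step; the substantive caveat is conceptual. The clean bound $\p$ is obtained by discarding the factor $\Psi(n)$ from Theorem~\ref{prob_corr_thm}, which is legitimate only in the large-$n$ regime where $\Psi(n)\approx 1$. For small~$n$ one would instead retain $\Psi(n)$, in which case the requirement would become $\Psi(n)\cdot\p \geq p$ and the resulting threshold on~$\D$ would be correspondingly larger; the present corollary should therefore be read as the large-$n$ statement flagged in the text immediately before it.
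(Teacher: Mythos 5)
Your proof is correct and takes essentially the same route as the paper's: both invoke Theorem~\ref{prob_corr_thm}, drop the $\Psi(n)$ factor under the standing large-$n$ assumption $\Psi(n)\approx 1$ stated just before the corollary, and then invert the bound $\prs{1-e^{-0.5\lambda\D}}^{(n+1)^2}\geq p$ by the same monotone algebra used for Corollary~\ref{Delta}. Your explicit flagging that the corollary is really a large-$n$ statement (and that retaining $\Psi(n)$ would require $\Psi(n)\cdot\p\geq p$) makes precise what the paper leaves implicit in its one-line proof.
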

\begin{proof}
From Theorem \ref{prob_corr_thm}, the probability of a correct run is lower bounded by \\ ${(1-e^{-0.5\lambda \D})^{(n+1)^2}}$. Thus: $\quad  \D \geq -\frac{2\ln\prs{1-\sqrt[(n+1)^2]{p}}}{\lambda} \Rightarrow p \leq (1-e^{-0.5\lambda \D})^{(n+1)^2}$
\end{proof}
\subsection{Expected Response Times}

Let $RT_k$ be the response time of agent $i_k$ in a run of the protocol. Before an agent performs its action, it waits for $n$ message chains of length $2$, which it uses to approximate the external input arrival time. Independently of that, the agent also relays a message from the supervisor's original broadcast. Hence, the agent's response time is the maximum over two random variables.

Formally, let $e_k$ be the exponential random variables associated with the delivery time of the ``trigger" message sent to agent $i_k$, and let $\tau_1^k, \tau_2^k, \ldots, \tau_n^k$ be the timestamps measured by the agent when its receives the ``redirect" messages. Notice that $\{\tau_m^k\}_{m=1}^n$ is a set of \emph{i.i.d.} Erlang random variables with parameters $2$ and $\lambda$. Let $\T^k \triangleq \max\{\tau_1^k, \tau_2^k, \ldots, \tau_n^k\}$, then:
\begin{align}
    RT_k &= \max\brs{e_k, \max\brs{\T^k, \D + 2\delta\cdot (k-1)}} \\ &= \max\brs{e_k, \T^k, \D + 2\delta\cdot (k-1)} \leq \max\brs{e_k, \T^k, \D + 2\delta n}
\end{align}
Then, the expected response time can be upper bounded:
\begin{align}
    RT &= \max_k \brs{RT_k} = \max_k \brs{\max\brs{e_k, \T^k, \D + 2\delta n}}
    = \max\brs{ \max_k\brs{e_k, \T^k},  \D + 2\delta n} 
 \\ &\leq \max_k\brs{e_k, \T^k} + \D + 2\delta n  \quad \Rightarrow \quad \E[RT] \leq \E\brk{\max_k\brs{e_k, \T^k}} + \D + 2\delta n
\end{align}

\begin{theorem}\label{async_core_ert_thm}
Fix $0<p<1$, and let $\D \geq \frac{-\ln(1-\sqrt[n]{p})}{\lambda}$. 
The expected response time of the PA-CORE$(\D)$ protocol is sub-linear in the number of participating agents. In particular, $\E[RT] \in \Oof\prs{\frac{\sqrt{n}\log(n)}{\lambda}}$.
\end{theorem}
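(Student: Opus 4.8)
The starting point is the bound derived immediately before the theorem statement, namely
$$\E[RT] \leq \E\brk{\max_k\brs{e_k, \T^k}} + \D + 2\delta n.$$
The plan is to estimate the three summands separately and show that the last one sets the order. Since $\delta = \frac{\ln(n)}{\lambda\sqrt{n}}$, we have $2\delta n = \frac{2\sqrt{n}\ln(n)}{\lambda}\in\Oof\prs{\frac{\sqrt{n}\log n}{\lambda}}$, which is exactly the claimed rate. It therefore suffices to prove that both $\D$ and $\E\brk{\max_k\brs{e_k,\T^k}}$ lie in $\Oof\prs{\frac{\log n}{\lambda}}$, so that each is absorbed into the $2\delta n$ term.

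For $\D$, I would take it at its minimal admissible value $\D = \frac{-\ln(1-\sqrt[n]{p})}{\lambda}$ (larger $\D$ only increases $RT$) and unwind it for a fixed $p$. Writing $\sqrt[n]{p}=e^{(\ln p)/n}$ and expanding for large $n$ gives $1-\sqrt[n]{p}=\frac{\ln(1/p)}{n}+o(1/n)$, so $-\ln\prs{1-\sqrt[n]{p}}=\ln(n)+O(1)$ and hence $\D\in\Oof\prs{\frac{\log n}{\lambda}}$. (The same estimate holds for the $\sqrt[(n+1)^2]{p}$-root version of Corollary \ref{delta_tp_cor}, which is still logarithmic, so the distinction is immaterial here.) This step is routine.

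The substantive step is bounding $\E\brk{\max_k\brs{e_k,\T^k}}$. Expanding each $\T^k=\max_m\tau_m^k$, this is the expected maximum of the full pool $\brs{e_k:1\le k\le n+1}$ together with $\brs{\tau_m^k:1\le k\le n+1,\ 1\le m\le n}$ — that is, $n+1$ variables distributed $Exp(\lambda)$ and $n(n+1)$ variables distributed $Erlang(2,\lambda)$. The key difficulty is that these variables are \emph{not} independent: a ``redirect'' delay from $i_j$ to two distinct receivers shares the common first leg (the trigger delay $i_0\to i_j$), so I cannot apply an order-statistics identity such as Lemma \ref{lemma_h_n} directly on the whole pool. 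I would instead use a maximal inequality that requires no independence: for nonnegative $X_1,\dots,X_N$ and any $0<s<\lambda$,
$$\E\brk{\max_i X_i}\leq \tfrac{1}{s}\ln\prs{\textstyle\sum_i \E\brk{e^{sX_i}}},$$
which follows from Jensen's inequality applied to $\exp$ together with $\max_i e^{sX_i}\le\sum_i e^{sX_i}$. Choosing $s=\lambda/2$ makes the $Exp(\lambda)$ moment generating function equal $2$ and the $Erlang(2,\lambda)$ one equal $4$, so $\sum_i\E\brk{e^{sX_i}}=2(n+1)+4n(n+1)=2(n+1)(2n+1)$, and therefore $\E\brk{\max_k\brs{e_k,\T^k}}\le\frac{2}{\lambda}\ln\prs{2(n+1)(2n+1)}\in\Oof\prs{\frac{\log n}{\lambda}}$.

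Combining the three estimates gives $\E[RT]\le\Oof\prs{\frac{\log n}{\lambda}}+\Oof\prs{\frac{\log n}{\lambda}}+\frac{2\sqrt{n}\ln n}{\lambda}\in\Oof\prs{\frac{\sqrt{n}\log n}{\lambda}}$, as required. The only genuine obstacle is the dependence among the $\tau_m^k$ in the middle term; the moment-generating-function argument sidesteps it cleanly, at the mild cost of a looser constant than an exact harmonic-number identity would yield. If one prefers a more elementary route, the pool can be split as $\max\brs{\max_k e_k,\ \max_{k,m}\tau_m^k}\le \max_k e_k+\max_{k,m}\tau_m^k$, bounding the first summand by $H_{n+1}/\lambda$ via Lemma \ref{lemma_h_n} (the $e_k$ \emph{are} independent across $k$) and the Erlang maximum by the same MGF bound.
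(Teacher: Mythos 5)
Your proof is correct, and its key step takes a genuinely different route from the paper's. The skeleton is shared: both arguments start from $\E[RT] \le \E\brk{\max_k\brs{e_k,\T^k}} + \D + 2\delta n$, identify $2\delta n = 2\sqrt{n}\ln(n)/\lambda$ as the dominant term, and verify that the other two terms are only logarithmic. The difference lies in how $\E\brk{\max_k\brs{e_k,\T^k}}$ is bounded. The paper (Lemma \ref{async_core_ert_lemma}) splits each Erlang timestamp $\tau_m^k = e_{m,1}^k + e_{m,2}^k$ into its two exponential legs and bounds it by $2\max\brs{e_{m,1}^k,e_{m,2}^k}$; the dependence you flag then dissolves, because the shared first legs merely occur several times in the resulting multiset, and repetitions do not affect a maximum, so what remains is a maximum of distinct \emph{i.i.d.} exponential variables to which Lemma \ref{lemma_h_n} applies, yielding $\frac{2H_{n^2+n}}{\lambda} \in \Oof\prs{\frac{\log(n)}{\lambda}}$. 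Your MGF/Jensen maximal inequality instead handles the dependent pool directly, with no structural decomposition; it is more general (it works for any finite pool of variables with finite moment generating functions, dependent or not), and it loses essentially nothing here, since $\frac{2}{\lambda}\ln\prs{2(n+1)(2n+1)}$ and $\frac{2H_{n^2+n}}{\lambda}$ are both roughly $\frac{4\ln(n)}{\lambda}$. Your Taylor-expansion estimate for $\D$ is likewise more elementary than the paper's Heine/L'H\^opital limit computation, and you correctly anticipate that the paper's proof actually invokes the $(n+1)^2$-root value of Corollary \ref{delta_tp_cor}; both are $\Theta\prs{\frac{\log(n)}{\lambda}}$, so the conclusion is unaffected. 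One caveat you share with the paper: the hypothesis ``$\D \ge -\ln(1-\sqrt[n]{p})/\lambda$'' must be read as ``$\D$ is chosen of (roughly) this order,'' since no $\Oof\prs{\frac{\sqrt{n}\log(n)}{\lambda}}$ bound can hold for arbitrarily large $\D$; your parenthetical on taking the minimal admissible $\D$ addresses this at least as explicitly as the paper does.
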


See Appendix \ref{async_core_ert_thm_proof} for proof.
In conclusion, we have presented a concurrent Ordered Response protocol that guarantees a correct run with high probability in a sub-linear expected response time. However, the trade-off is increased message complexity. Due to the ``redirect" broadcast that every worker agent performs, we have that the message complexity of the PA-CORE protocol is $\Theta (n)$ broadcasts.

\subsection{The PA-CORE-Message-Chain Hybrid Protocol}

The PA-CORE protocol guarantees \emph{Ordered Response} with high probability in sub-linear expected response time, when $\Psi(n) \approx 1$. We know that $\Psi(n) \approx 1$ when the number of participating agents is very large. Consequently, we have not given any guarantees on the performance of the PA-CORE protocol for small $n$.

Similarly to the approach we have presented in the EDD model with a global clock, we propose concurrently running the PA-CORE and Message Chain protocols in EDD model without a global clock. The advantages of doing so are two-fold. Firstly, as before, the expected response time of the protocol will be no worse than that of the Message Chain protocol's, and for large $n$, the expected response time will be sub-linear in $n$. Secondly, our analysis for the expected response time and probability of a correct run are only accurate and informative for large $n$. 

With the addition of a concurrent message chain, the performance of the hybrid protocol is well-known to us for all values of $n$. For small $n$, the message chain guarantees quick \emph{Ordered Response} \emph{w.p.} $1$. For large $n$, the PA-CORE part of the hybrid protocol dominates, and guarantees \emph{Ordered Response} with high probability in sub-linear expected time. As a result, the hybrid PA-CORE-Message-Chain protocol in a setting without a global clock, guarantees \emph{Ordered Response} with high probability in sub-linear expected time.

\section{Discussion and Conclusions}\label{conclusions}

One of the goals of the theory of distributed systems is to understand how various properties of the system affect the solvability and efficiency of distributed problems. To this end, the impact that communicating over channels with probabilistic guarantees on message delivery times has on the efficiency of distributed protocols has received little attention in our community. A comprehensive study of the impact of probabilistic channels on coordination could ultimately involve a taxonomy of different probabilistic assumptions, a variety of coordination problems, tight upper and lower bounds, and an assessment of practical use cases and practical implementation details. This, of course, is a grand research program that goes well beyond the scope of a single conference paper. Our purpose in this paper is to take several initial steps in this direction. To this end, clarity and simplicity are central. Indeed, our goal is to illustrate that such questions can be formulated, and that probabilistic channels can provide a genuine advantage. As we have discussed, asynchronous protocols should be directly implementable in settings in which transmission over a probabilistic channel is guaranteed to succeed eventually with probability 1. In the setting of a natural coordination problem, Ordered Response, and a popular and mathematically accessible exponential distribution over transmission times, we showed 
is possible to improve over the asynchronous solution in a significant manner. This is true both if clocks are synchronized and when they are not. The solutions that we provide are novel, and their analysis is rigorous.  Their simplicity is a feature, and not a bug. 
Being an initial foray into the topic, our treatment opens the door for many extensions, improvements and refinements, left for future research. 
Proving lower bounds and matching upper bounds is one. Exploring other probabilistic distributions, and possibly other coordination problems is another.  Finally, we consider modifying the treatment for concrete practical settings an important open problem. Much is left to explore regarding coordination in this class of models. By establishing that the complexity of asynchronous solutions can be significantly improved on, we have provided strong motivation for such investigations.

\bibliography{bibliography}
\bibliographystyle{plainnat}

\newpage
\appendix

\section{Probability Space Formulation}
\label{prob_space_form}
We define a probability space $\mathcal{S} = (\Omega, \mathcal{F}, \Pr)$  that supports a countably infinite number of exponential random variables. The sample space $\Omega \triangleq \mathbb{R}_{+}^{\mathbb{N}}$ is the space of all non-negative sequences. The $\sigma$-algebra $\mathcal{F} \triangleq \mathbb{B}(\mathbb{R}_{+}^{\mathbb{N}}) = \sigma(\Pi_{i=1}^m (a_i,b_i)\times \Pi_{m+1}^\infty \mathbb{R}_{+} : m\in\mathbb{N}, a_i,b_i\in \mathbb{R}_{+})$ is the Borel sigma-algebra generated by the Tychonoff topology (see \cite{bourbaki2013general}). The probability measure of an event $E = \Pi_{i=1}^m (a_i,b_i)\times \Pi_{m+1}^\infty \mathbb{R}_{+}$, is then defined as $\Pr(E) = \Pi_{i=1}^m \int_{a_i}^{b_i} \lambda e^{-\lambda t}dt$.

We have chosen this probability space due to the continuous nature of time in our model. In simple terms, $E = \Pi_{i=1}^m (a_i,b_i)\times \Pi_{m+1}^\infty \mathbb{R}_{+}$ describes the event that the $i$-th random variable, for $i\leq m$, takes on a value in the interval $(a_i,b_i)$, and that for all $i>m$, it takes on some arbitrary non-negative value. The number $m$ represents the number of messages sent in runs in which the event $E$ occurs. The event $E$ itself is an $m$-dimensional box in a countably infinite dimensional space. $\mathcal{F}$ is the $\sigma$-algebra generated by closure under complement, and under countable unions and intersections of such events $E$ for any $m\in\mathbb{N}$ and $a_i,b_i\in \mathbb{R}_{+}$. Additionally, we have chosen the probability measure $\Pr(E) = \Pi_{i=1}^m \int_{a_i}^{b_i} \lambda e^{-\lambda t}dt$, i.e., the product of $m$ exponential probability measures, since we assume message arrival times are exponential random variables that are statistically independent. 

Let $r \in R(\mathcal{P})$ be a run. We map each run to an element $\omega$ in the sample space $\Omega$. We do this by enumerating the messages sent in $r$ as $m_{i,j}^{(1)}, m_{i,j}^{(2)}, \ldots$ and so forth, for all agent-pairs $i,j \in \mathbb{P}$, where message $m_{i,j}^{(k)}$ is the $k$-th message sent from agent $i$ to agent $j$. Every message $m_{i,j}^{(k)}$ is associated with an exponential random variable $e_{i,j}^{(k)}$ representing its delay. In the run $r$, these random variables take on values $\nu_{i,j}^{(k)}$. Thus, the run $r$ is mapped to an $\omega  \in \Omega$ that is the sequence of $\nu_{i,j}^{(k)}$ for all $i,j \in \mathbb{P}$ and $k\in \mathbb{N}$. Notice that the number of messages in $r$ must be finite. However, the sequence $\omega$ is infinitely long and contains values for message delays that are not sent at all in $r$. So, for the values of $\omega$ that are not associated with a realization of a message delay in $r$, we choose some arbitrary value. For instance, we may choose the value $1$ for all of them, or any other set of non-negative real numbers.

We will mostly use simpler notations whenever the context allows. For example, we define the probability of a run $r \in R(\mathcal{P})$ in which $m$ messages are sent:

\begin{definition}
Fix a run $r \in R(\mathcal{P})$. Let $e_1, e_2, \ldots, e_m$ be the set of \emph{i.i.d.} exponential random variables associated with the messages sent in $r$, let $\nu_1, \nu_2, \ldots, \nu_m$ be their realizations in $r$, and let $f_{e_k}$ be the exponential probability density function (PDF) of $e_k$. Then, the probability of the run $r$ is $\Pr(r) \triangleq \Pi_{k=1}^m f_{e_k}(\nu_k)$.
\end{definition}

\section{Proofs}

\subsection{Proof of Theorem \ref{thm_corr_prob}}\label{thm_corr_prob_proof}

\begin{theorem}
Fix $\D > 0$, and denote $q \triangleq 1-e^{-\lambda \D}$, then the probability of a correct run of the CORE$(\D)$ protocol is $ \sum_{k=0}^{n} \frac{1}{k!}q^{n-k}(1-q)^k$.
\end{theorem}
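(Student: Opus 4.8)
The plan is to characterize each agent's action time, identify the combinatorial structure that a correct run must have, and then compute the probability of each structure type using independence together with the memorylessness of the exponential distribution.

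First I would observe that under the CORE$(\D)$ protocol agent $i_k$ acts at time $t_k = \max\{\D, e_k\}$, where $e_k$ is the arrival time of the trigger message sent to $i_k$ (exponentially distributed with parameter $\lambda$). Thus $q = \Pr(e_k \leq \D)$ is exactly the probability that $i_k$ receives its message by time $\D$; call such an agent \emph{on-time}, and otherwise \emph{late}. An on-time agent acts at exactly time $\D$, whereas a late agent acts at its arrival time $e_k > \D$.

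Next I would establish the structure of a correct run. The key observation is that the set of late agents must form a suffix $\{i_{j+1}, \ldots, i_n\}$ of the ordering. Indeed, suppose $i_k$ is late, so $t_k > \D$; then for any $m > k$ with $i_m$ on-time we would have $t_m = \D < t_k$, violating \emph{Safety}, which demands $t_k \leq t_m$. Hence once an agent is late, every later-indexed agent is late as well. Moreover, among the late agents \emph{Safety} requires their arrival times to be non-decreasing, $e_{j+1} \leq \cdots \leq e_n$, and conversely any on-time prefix together with a late suffix whose arrival times increase yields a correct run. (Ties, i.e.\ events where some $e_k$ equals $\D$ exactly or two late arrival times coincide, occur with probability zero and are ignored.)

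I would then partition the correct runs by the number $k$ of late agents, so that $i_1, \ldots, i_{n-k}$ are on-time and $i_{n-k+1}, \ldots, i_n$ are late and ordered; these events are disjoint across $k$ and exhaust all correct runs. For fixed $k$, the prefix event has probability $q^{n-k}$, the event that all $k$ suffix agents are late has probability $(1-q)^k$, and since the two events depend on disjoint subsets of the independent variables $e_1, \ldots, e_n$, they multiply. The main step — and the step I expect to be the principal obstacle — is the ordering factor $\frac{1}{k!}$: conditioned on being late, by the memorylessness of the exponential distribution the shifted times $e_{n-k+i} - \D$ are i.i.d.\ exponential, hence exchangeable continuous random variables, so the probability that they fall in any one prescribed order is $\frac{1}{k!}$. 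Combining these gives $\frac{1}{k!}q^{n-k}(1-q)^k$ for each $k$, and summing over $k = 0, 1, \ldots, n$ yields $\sum_{k=0}^{n} \frac{1}{k!}q^{n-k}(1-q)^k$, as claimed.
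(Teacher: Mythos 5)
Your proposal is correct and follows essentially the same route as the paper's proof: the observation that $t_k = \max\{\D, e_k\}$, the decomposition of correct runs according to which agents are late (equivalently, the paper's conditioning on the $n+1$ events $E_0,\ldots,E_n$ in its application of the Law of Total Probability, with all other arrival patterns contributing zero), the factorization $q^{n-k}(1-q)^k$ by independence, and the symmetry argument giving the $\frac{1}{k!}$ ordering factor. Your justification of that factor via memorylessness and exchangeability of the shifted arrival times is a slightly more explicit rendering of the paper's appeal to symmetry, but it is the same argument.
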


\begin{proof}
In the CORE$(\D)$ protocol, every agent $i_k$ acts at a time $t_k \geq \D$. Let $e_1, e_2,\ldots,e_n$ be the exponential random variables associated with the messages of the supervisor's broadcast to agents $i_1, i_2, \ldots, i_n$ respectively. for every agent $i_k$, if $e_k \leq \D$, then $t_k = \D$, and otherwise $t_k = e_k$. Hence, for all $k\geq1$:
\begin{equation}
    t_k = \max\{\D, e_k\}
\end{equation}
Notice that the set $\{t_1, t_2, \ldots, t_n\}$ are \emph{i.i.d.} random variables. To calculate the probability $\Pr(t_1 \leq t_2 \leq \cdots \leq t_n)$, we use the Law of Total Probability by conditioning on the values of the set $\{e_k\}_{k=1}^{n}$ relative to $\D$, i.e:
\begin{align}
    E_0 = e_{1} \leq \D, \ e_{2} \leq \D, \ldots, \ &e_{n-2} \leq \D, \ e_{n-1} \leq \D, \ e_{n} \leq \D \\
    E_1 = e_{1} \leq \D, \ e_{2} \leq \D, \ldots, \ &e_{n-2} \leq \D, \ e_{n-1} \leq \D, \ e_{n} > \D \\
    E_2 = e_{1} \leq \D, \ e_{2} \leq \D, \ldots, \ &e_{n-2} \leq \D, \ e_{n-1} > \D, \ e_{n} > \D \\
    &\vdots \\
    E_{n} = e_{1} > \D, \ e_{2} > \D, \ldots, \ &e_{n-2} > \D, \ e_{n-1} > \D, \ e_{n} > \D 
\end{align} 
We use only these $n+1$ combinations out of the possible $2^{n}$, since for any other event the agents will necessarily act out of order, and the probability of Ordered Response conditioned on such an event would be zero. By the Law of Total Probability:
\begin{align}
    \Pr(t_{1} \leq \ldots \leq t_n ) = \sum_{k=0}^{n} \Pr(t_{1} \leq \ldots \leq t_n \mid E_k)\Pr(E_k) \label{tot_prob}
\end{align}
The probability of event $E_k$ is the product of the probability that the first $n-k$ messages are delayed at most $\D$ time, and the probability that the remaining $k$ messages are delayed longer:
\begin{align}
    \Pr(E_k) = q^{n-k}(1-q)^k \label{evnt_prob}
\end{align}
Given an event $E_k$, it holds that the first $n-k$ agents will act simultaneously. However the remaining $k$ messages arrive late, and the agents act immediately upon their delivery. Due to symmetry, for the last $k$ agents, any temporal ordering of their actions is equally likely. Thus:
\begin{align}
    \Pr(t_1 \leq \ldots \leq t_n \mid E_k) =\Pr(t_{n-k+1} \leq \ldots \leq t_n \mid E_k) = \frac{1}{k!} \label{cond_prob}
\end{align}
Plugging \ref{evnt_prob} and \ref{cond_prob} into Equation \ref{tot_prob} yields:
\begin{align}
    \Pr(t_{1} \leq \ldots \leq t_n ) = \sum_{k=0}^{n} \frac{1}{k!}p^{n-k}(1-p)^k
\end{align}
\end{proof}

\subsection{Proof of Lemma \ref{lemma_h_n}}\label{lemma_h_n_proof}

\begin{lemma}
$E[\M] = \frac{H_n}{\lambda}$, where $H_n$ is the $n^{\text{th}}$ harmonic number; i.e. $H_n = \sum_{m=1}^n \frac{1}{m}$.
\end{lemma}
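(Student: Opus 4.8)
The plan is to compute $E[\M]$ directly from the cumulative distribution function of the maximum of $n$ i.i.d.\ exponentials, using the standard tail-integral formula for the expectation of a non-negative random variable. Since each $e_k \sim \mathrm{Exp}(\lambda)$ and the $e_k$ are independent, the event $\M \le t$ coincides with the event that \emph{all} the $e_k \le t$, so the CDF is $F_{\M}(t) = \prs{1-e^{-\lambda t}}^n$ for $t \ge 0$. First I would write $E[\M] = \int_0^\infty \Pr\prs{\M > t}\,dt = \int_0^\infty \prs{1 - (1-e^{-\lambda t})^n}\,dt$.

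Next I would perform the substitution $u = e^{-\lambda t}$, so that $dt = -du/(\lambda u)$ and $u$ ranges from $1$ down to $0$; flipping the limits absorbs the sign and turns the integral into $\frac{1}{\lambda}\int_0^1 \frac{1-(1-u)^n}{u}\,du$. The key algebraic observation is the finite geometric identity $1 - (1-u)^n = u\sum_{k=0}^{n-1}(1-u)^k$, which cancels the $1/u$ factor and leaves $\frac{1}{\lambda}\int_0^1 \sum_{k=0}^{n-1}(1-u)^k\,du$. Evaluating each term via $\int_0^1 (1-u)^k\,du = \frac{1}{k+1}$ yields $\frac{1}{\lambda}\sum_{k=0}^{n-1}\frac{1}{k+1} = \frac{1}{\lambda}\sum_{m=1}^n \frac{1}{m} = \frac{H_n}{\lambda}$, which is exactly the claim.

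The one place I would be careful is the step that removes the apparent singularity at $u = 0$: the integrand $\frac{1-(1-u)^n}{u}$ looks singular there, and it is precisely the geometric expansion that exhibits it as a genuine polynomial (hence bounded and integrable). I would therefore display that identity explicitly rather than treat the cancellation as obvious; everything else in the derivation is a routine substitution and term-by-term integration.

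As a remark, I would note a slicker purely probabilistic alternative exploiting the memoryless property of the exponential: the minimum of $j$ i.i.d.\ $\mathrm{Exp}(\lambda)$ variables is $\mathrm{Exp}(j\lambda)$, and by memorylessness the successive spacings between the order statistics are independent, with the $m$-th spacing distributed as $\mathrm{Exp}(m\lambda)$; summing their expectations gives $\sum_{m=1}^n \frac{1}{m\lambda} = \frac{H_n}{\lambda}$ immediately. The obstacle there is instead the rigorous justification of the independence of the spacings, so I expect the integral computation to be the more elementary and self-contained of the two routes.
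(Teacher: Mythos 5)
Your proof is correct, but it takes a genuinely different route from the paper. You compute $\E[\M]$ analytically: tail-integral formula $\E[\M]=\int_0^\infty \Pr(\M>t)\,dt$ with the CDF $(1-e^{-\lambda t})^n$, the substitution $u=e^{-\lambda t}$, and the geometric identity $1-(1-u)^n = u\sum_{k=0}^{n-1}(1-u)^k$ to resolve the apparent singularity at $u=0$; each step checks out and the term-by-term integration yields $H_n/\lambda$. The paper instead uses exactly the probabilistic argument you relegate to your closing remark: it considers the order statistics $T_1\le\cdots\le T_n$ of the $e_k$, notes that $T_1\sim\mathrm{Exp}(n\lambda)$, and invokes memorylessness to conclude that the spacing $T_{m+1}-T_m$ is distributed as the minimum of $n-m$ fresh exponentials, i.e.\ $\mathrm{Exp}((n-m)\lambda)$; then $\E[\M]=\E[T_n]=\sum_{m}\E[T_{m+1}-T_m]=H_n/\lambda$ by linearity. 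Note that the obstacle you flag for that route --- rigorously justifying the \emph{independence} of the spacings --- is not actually needed: linearity of expectation requires only the marginal distribution of each spacing, which is what memorylessness delivers, and this is all the paper uses. So the two proofs trade off as follows: yours is self-contained elementary calculus with no appeal to memorylessness, while the paper's is shorter and exposes the probabilistic structure (and generalizes immediately to, e.g., the expected value of any particular order statistic).
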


\begin{proof}
Recall that $\M = \max\{e_1, e_2, \ldots, e_n\}$. The random variables $\{e_k\}_{k=1}^n$ are assumed to be \emph{i.i.d.} exponential random variables with parameter $\lambda$. Denote by $T_m$ to be the $m^{\text{th}}$ smallest of the set $\{e_k\}_{k=1}^n$, i.e., $T_1 = \min_k \{e_k\}$ and $T_m = \min \left\{ \{e_k\}_{k=1}^n  \setminus  \{T_1,\ldots,T_{m-1}\} \right\}$. 

It is well known that the minimum of $n$ \emph{i.i.d.} exponential random variables with parameter $\lambda$ is also an exponential random variable with parameter $n\lambda$. Thus, $\E[T_1] = \frac{1}{n\lambda}$. 

Since exponential random variables are continuous, the probability that any two of the $n$ random variables have the same value is 0. Thus, the $n-1$ other random variables all have values strictly larger than $T_1$. However, the memorylessness property of the exponential distribution implies that knowledge of $T_1$ essentially “resets” the values of the other random variables, so that the time between $T_1$ and $T_2$ is the same (distributionally) as the time until the first of $n-1$ \emph{i.i.d.} exponential random variables takes on a value. Hence, $\E[T_2 - T_1] = \frac{1}{(n-1) \lambda}$.

By inductive reasoning, we get that $\forall 1 \leq m \leq n -1: \E[T_{m+1} - T_m] = \frac{1}{(n-m) \lambda}$. As a result, the expected value of the maximum of the $n$ exponential random variables is:
\begin{equation}
     \E[\M] = \E[T_n] = \E\left[ T_1 + \sum_{m=1}^{n-1} (T_{m+1} - T_m) \right] = \sum_{m=0}^{n-1} \frac{1}{(n-m) \lambda} = \sum_{m=1}^{n} \frac{1}{m \lambda} = \frac{H_{n}}{\lambda}
\end{equation}
\end{proof}

\subsection{Proof of Theorem \ref{sync_core_ert}}\label{sync_core_ert_proof}

\begin{theorem}
The expected response time of the CORE$(\D)$ protocol is logarithmic in the number of participating agents i.e., $\E[RT] \in \Oof\ \left(\frac{\log(n)}{\lambda}\right)$.
\end{theorem}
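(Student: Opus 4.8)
The plan is to combine the bound on the expected response time from Equation~\eqref{ert_bound} with the explicit choice of $\D$ and the harmonic-number estimate from Lemma~\ref{lemma_h_n}. Recall that we have already established $\E[RT] \leq \D + \E[\M]$, and by Lemma~\ref{lemma_h_n} we know $\E[\M] = H_n/\lambda$. Since $H_n \in \Theta(\log n)$, the term $\E[\M]$ is already $\Oof\!\left(\frac{\log n}{\lambda}\right)$, so the entire burden of the proof falls on showing that the chosen value of $\D$ is also $\Oof\!\left(\frac{\log n}{\lambda}\right)$.

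First I would substitute the prescribed value $\D = \frac{-\ln\!\left(1-\sqrt[n]{p}\right)}{\lambda}$ and analyze its asymptotic growth in $n$ for fixed $p$. The key quantity to control is $1 - p^{1/n}$. I would write $p^{1/n} = e^{(\ln p)/n}$ and expand for large $n$: since $\ln p < 0$ is a fixed constant, $(\ln p)/n \to 0$, and a first-order expansion gives $p^{1/n} = 1 + \frac{\ln p}{n} + \Oof\!\left(\frac{1}{n^2}\right)$, hence $1 - p^{1/n} = \frac{-\ln p}{n} + \Oof\!\left(\frac{1}{n^2}\right) = \Theta\!\left(\frac{1}{n}\right)$. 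Plugging this in, $\D = \frac{-\ln\!\left(1 - p^{1/n}\right)}{\lambda} = \frac{-\ln\!\left(\Theta(1/n)\right)}{\lambda} = \frac{\ln n + \Oof(1)}{\lambda}$, which is $\Oof\!\left(\frac{\log n}{\lambda}\right)$ as desired.

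The main obstacle, though more of a bookkeeping care than a deep difficulty, is making the asymptotic manipulation of $1 - p^{1/n}$ rigorous rather than merely heuristic: one must justify that the constant $p$ is treated as fixed (so $\ln p$ is a genuine constant absorbed into the $\Oof$-notation) and that the logarithm of a $\Theta(1/n)$ quantity is cleanly $\ln n + \Oof(1)$. A clean way to avoid delicate expansions is to bound $1 - p^{1/n}$ from below by something of order $1/n$ using the elementary inequality $e^x \geq 1 + x$; concretely, $p^{1/n} = e^{(\ln p)/n} \geq 1 + \frac{\ln p}{n}$ yields $1 - p^{1/n} \leq \frac{-\ln p}{n}$, giving the needed upper bound $\D \leq \frac{\ln n + \ln(-1/\ln p)}{\lambda} \cdot(1+o(1))$ directly.

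Finally I would assemble the pieces: $\E[RT] \leq \D + \E[\M] \leq \Oof\!\left(\frac{\log n}{\lambda}\right) + \frac{H_n}{\lambda} \in \Oof\!\left(\frac{\log n}{\lambda}\right)$, since the sum of two functions each in $\Oof\!\left(\frac{\log n}{\lambda}\right)$ remains in that class. This completes the argument, with the only genuine content being the logarithmic growth of the tuned parameter $\D$.
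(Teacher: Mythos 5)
Your overall route is the same as the paper's: both start from the bound $\E[RT] \le \D + \E[\M]$ of Equation~\eqref{ert_bound}, invoke Lemma~\ref{lemma_h_n} to get $\E[\M] = H_n/\lambda \in \Theta(\log(n)/\lambda)$, and reduce everything to showing that $\D = \frac{-\ln(1-\sqrt[n]{p})}{\lambda}$ grows like $\log(n)/\lambda$ for fixed $p$; the paper handles this last step by proving $\lim_{n\to\infty}\frac{-\ln(1-\sqrt[n]{p})}{\ln(n)} = 1$ via Heine and L'H\^opital, whereas you use a Taylor expansion of $p^{1/n} = e^{(\ln p)/n}$. Your heuristic expansion in the second paragraph is correct in substance: $1 - p^{1/n} = \frac{-\ln p}{n} + \Oof(1/n^2)$ does give $\D = \frac{\ln n + \Oof(1)}{\lambda}$.

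However, the step you offer as the rigorous substitute---and which you yourself flag as carrying the whole burden of rigor---has its inequality pointing the wrong way. From $e^x \ge 1+x$ with $x = (\ln p)/n < 0$ you correctly obtain $1 - p^{1/n} \le \frac{-\ln p}{n}$, but this is an \emph{upper} bound on $1 - p^{1/n}$, and since $t \mapsto -\ln(t)$ is decreasing it yields
\begin{equation*}
\D \;=\; \frac{-\ln\prs{1-p^{1/n}}}{\lambda} \;\ge\; \frac{1}{\lambda}\prs{\ln n - \ln(-\ln p)},
\end{equation*}
i.e.\ a \emph{lower} bound on $\D$---the opposite of what the theorem needs (this is the direction one would use to show the protocol cannot beat logarithmic time, not that it achieves it). To get the required upper bound you must bound $1 - p^{1/n}$ from \emph{below} by a quantity of order $1/n$. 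For instance, use $e^x \le \frac{1}{1-x}$ for $x<1$: with $x = (\ln p)/n$ this gives $p^{1/n} \le \frac{n}{n-\ln p}$, hence $1 - p^{1/n} \ge \frac{-\ln p}{n-\ln p}$, and therefore
\begin{equation*}
\D \;\le\; \frac{1}{\lambda}\,\ln\prs{\frac{n-\ln p}{-\ln p}} \;=\; \frac{\ln n + \Oof(1)}{\lambda},
\end{equation*}
which is exactly the bound you wanted. (Equivalently, the inequality $1 - e^{-y} \ge y/2$ for $0 \le y \le 1$, applied with $y = (-\ln p)/n$, works for all $n \ge -\ln p$.) With this repair, the assembly in your final paragraph goes through unchanged.
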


\begin{proof}
From Inequality \ref{ert_bound}:
\begin{equation}
    \E[RT] \leq \D + \E[\M]
\end{equation}
We plug in the expressions for $\D$ and $\E[\M]$ from Corollary \ref{Delta} and Lemma \ref{lemma_h_n}:
\begin{equation}
    \E[RT] \leq \D + \E[\M]\ = \frac{1}{\lambda}\cdot \prs {-\ln(1-\sqrt[n]{p}) + H_n}.
\end{equation}

Notice that $-\ln(1-\sqrt[n]{p}) \in \Theta(\log(n))$ for any $0 < p < 1$:
\begin{align}
    \lim_{n \to \infty} \frac{-ln(1-\sqrt[n]{p})}{\ln(n)} &\underset{\text{Heine}}{=} \lim_{x \to \infty} \frac{-ln(1-\sqrt[x]{p})}{\ln(x)} \underset{\text{L'Hôpital}}{=} \lim_{x \to \infty} \frac{-\frac{1}{1-\sqrt[x]{p}}\cdot \frac{\sqrt[x]{p}\ln(p)}{x^2}}{\frac{1}{x}} \\
    &= \lim_{x \to \infty} \frac{-\frac{\ln(p)}{x}}{\frac{1-\sqrt[x]{p}}{\sqrt[x]{p}}} = \lim_{x \to \infty} \frac{-\frac{\ln(p)}{x}}{\sqrt[-x]{p} - 1} \underset{\text{L'Hôpital}}{=} \lim_{x \to \infty} \frac{\frac{\ln(p)}{x^2}}{\frac{\sqrt[-x]{p}\ln(p)}{x^2}} \\
    &= \lim_{x \to \infty} \sqrt[x]{p} = 1
\end{align}
Additionally, the harmonic numbers roughly approximate the natural logarithm function \cite{conway1998book}, i.e., $H_n \in \Theta(log(n))$. Thus, $\E[RT] \in  \Oof\ \left(\frac{\log(n)}{\lambda}\right)$.
\end{proof}

\subsection{Proof of Theorem \ref{delta_bound_thm}}\label{delta_bound_thm_proof}

\begin{theorem}
Let $T_k = \frac{1}{n}\sum_{m=1}^n \tau_m - \frac{2}{\lambda}$, such that $\forall m : \tau_m \sim Erlang(2,\lambda)$, then there exists a function $f(n)$ such that:
\begin{equation}
   \pr{\max_{1\leq k\leq n} \abs{T_k} \leq \frac{\ln(n)}{\lambda \sqrt{n} }} \geq 1 - f(n)
\end{equation}
and $f(n) \xrightarrow{n \to \infty} 0 $.

\end{theorem}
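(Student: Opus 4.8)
The plan is to reduce the two-sided deviation bound for all agents to a single Gamma-tail estimate and then absorb all the agents with one union-bound factor. First I would fix a worker agent $i_k$ and observe that its $n$ recorded timestamps $\tau_1,\ldots,\tau_n$ are independent $\text{Erlang}(2,\lambda)$ variables: each $\tau_m$ is the delay of a trigger message $i_0\to i_j$ followed by the delay of that agent's redirect $i_j\to i_k$, a sum of two independent $\text{Exp}(\lambda)$ delays, and for fixed $k$ distinct $j$'s use disjoint messages. Hence $S_k \triangleq \sum_{m=1}^n \tau_m$ is a sum of $2n$ i.i.d.\ $\text{Exp}(\lambda)$ variables, so $S_k \sim \text{Gamma}(2n,\lambda)$ with mean $2n/\lambda$. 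Since $T_k = S_k/n - 2/\lambda$, the event $|T_k|\le\delta$ with $\delta=\frac{\ln n}{\lambda\sqrt n}$ is exactly $|S_k - 2n/\lambda|\le n\delta = \frac{\sqrt n\,\ln n}{\lambda}$.

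Next I would bound each tail $\Pr(S_k > \tfrac{2n}{\lambda}+n\delta)$ and $\Pr(S_k < \tfrac{2n}{\lambda}-n\delta)$ by the Chernoff method, using $\E[e^{tS_k}]=(\lambda/(\lambda-t))^{2n}$. Writing the thresholds in relative form $\tfrac{2n}{\lambda}(1\pm\beta)$ with $\beta=\frac{\ln n}{2\sqrt n}$ and optimizing the exponent (the optimal tilts are $t=\tfrac{\lambda\beta}{1+\beta}$ for the upper tail and $t=\tfrac{\lambda\beta}{1-\beta}$ for the lower tail), both bounds collapse, using $2n\beta=\sqrt n\,\ln n$, to
\begin{align*}
\Pr\left(S_k > \tfrac{2n}{\lambda}+n\delta\right) &\le \left(1+\tfrac{\ln n}{2\sqrt n}\right)^{2n} n^{-\sqrt n},\\
\Pr\left(S_k < \tfrac{2n}{\lambda}-n\delta\right) &\le \left(1-\tfrac{\ln n}{2\sqrt n}\right)^{2n} n^{+\sqrt n}.
\end{align*}
I would then apply the union bound over the worker agents (contributing the displayed factor $n$); independence across agents is neither available nor needed, since distinct agents reuse the same trigger delays, but the union bound relies only on the marginal estimates above. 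This yields
\[
\Pr\left(\max_k |T_k| \le \delta\right) \ge 1 - \left(1-\tfrac{\ln n}{2\sqrt n}\right)^{2n} n^{1+\sqrt n} - \left(1+\tfrac{\ln n}{2\sqrt n}\right)^{2n} n^{1-\sqrt n},
\]
and since a probability is nonnegative, taking the maximum of the right-hand side with $0$ gives exactly $\Psi(n)$.

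Finally, to establish $\Psi(n)\to 1$ I would show both subtracted terms vanish. Taking logarithms and expanding $\ln(1\pm\beta)=\pm\beta-\tfrac{\beta^2}{2}+O(\beta^3)$ with $\beta=\frac{\ln n}{2\sqrt n}\to 0$, the leading $\pm\sqrt n\,\ln n$ contributions cancel against the $n^{\mp\sqrt n}$ factors and the residual $\pm\sqrt n\,\ln n$ from the union-bound exponents, leaving in each case a dominant term $\ln n - \tfrac{(\ln n)^2}{4}\to-\infty$; hence both terms tend to $0$.

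The main obstacle is precisely this last asymptotic cancellation. Because the Chernoff factors and the union-bound factor are each individually of order $n^{\pm\sqrt n}$ (enormous), the bound is only useful once these $\sqrt n\,\ln n$ scale contributions cancel; one must therefore expand the logarithm to \emph{second} order and verify that the surviving $-(\ln n)^2/4$ term dominates the leftover $+\ln n$ coming from the union bound. Everything preceding that step is a routine Chernoff computation, and the choice $\delta=\frac{\ln n}{\lambda\sqrt n}$ is exactly what makes the second-order term negative and growing.
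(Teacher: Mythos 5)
Your proposal is correct and follows essentially the same route as the paper's proof: a union bound over the agents combined with optimized Chernoff bounds on both tails of the Gamma$(2n,\lambda)$ sum (with the same tilts $t=\lambda\beta/(1\pm\beta)$ yielding the factors $(1\pm\frac{\ln n}{2\sqrt n})^{2n}n^{\mp\sqrt n}$), followed by a second-order Taylor expansion of $\ln(1\pm\beta)$ to show the dominant $-(\ln n)^2/4$ term drives both error terms to zero. Your explicit remark that independence across agents is unavailable (shared trigger delays) but unnecessary for the union bound is a point the paper leaves implicit.
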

\begin{proof}

Denote by $Z_k = \frac{1}{n}\sum_{m=1}^n \tau_m - \frac{2}{\lambda}$. Then, $Z_k \sim Erlang(2n, \lambda n)$.
\begin{align}
   &\pr{\max_k \abs{T_k} \leq \frac{\ln(n)}{\lambda \sqrt{n} } } = \pr{\bigcap_{k=1}^n \brs{ \abs{T_k} \leq \frac{\ln(n)}{\lambda \sqrt{n} } }} = 1- \pr{\bigcup_{k=1}^n \brs{ \abs{T_k} \geq \frac{\ln(n)}{\lambda \sqrt{n} } }}\\ 
   &\underset{\text{Union Bound}}{\geq} 1- \sum_{k=1}^n\pr{\abs{T_k} \geq \frac{\ln(n)}{\lambda \sqrt{n} }} = 1- \sum_{k=1}^n\pr{\abs{Z_k - \frac{2}{\lambda}} \geq \frac{\ln(n)}{\lambda \sqrt{n} }} = 
\end{align}
\begin{equation} \label{gen_BND}
    1- n\cdot \pr{\abs{Z_k - \frac{2}{\lambda}} \geq \frac{\ln(n)}{\lambda \sqrt{n} }}
\end{equation}

\begin{align}
    \pr{\abs{Z_k - \frac{2}{\lambda}} \geq \frac{\ln(n)}{\lambda \sqrt{n} }} = &\pr{ \brs{Z_k  \leq \frac{2}{\lambda} - \frac{\ln(n)}{\lambda \sqrt{n} }} \cup \brs{Z_k \geq \frac{2}{\lambda} + \frac{\ln(n)}{\lambda \sqrt{n} }}} = \\
    &\pr{ Z_k  \leq \frac{2}{\lambda} - \frac{\ln(n)}{\lambda \sqrt{n} }} + \pr{Z_k \geq \frac{2}{\lambda} + \frac{\ln(n)}{\lambda \sqrt{n} }}
\end{align}

We upper-bound both probabilities using the Chernoff Bound:

For the left-hand side:
\begin{align}
    &\pr{ Z_k  \leq \frac{2}{\lambda} - \frac{\ln(n)}{\lambda \sqrt{n} }} = \pr{-Z_k \geq \frac{\ln(n)}{\lambda \sqrt{n} } - \frac{2}{\lambda} } \leq \inf_{s>0}\brs{\frac{\E\brk{e^{-sZ_k}}}{e^{s\prs{\frac{\ln(n)}{\lambda \sqrt{n} } - \frac{2}{\lambda}}}}} = \\ \label{chernoff_1}
    &\inf_{s>0}\brs{\prs{1+\frac{s}{n\lambda}}^{-2n}e^{-s\prs{\frac{\ln(n)}{\lambda \sqrt{n} } - \frac{2}{\lambda}}}} \\ 
    & \frac{d}{ds} \brk{\prs{1+\frac{s}{n\lambda}}^{-2n}e^{-s\prs{\frac{\ln(n)}{\lambda \sqrt{n} } - \frac{2}{\lambda}}}} = 0 \\
    &\Rightarrow -\frac{2}{\lambda}\prs{1+\frac{s}{n\lambda}}^{-2n-1}e^{-s\prs{\frac{\ln(n)}{\lambda \sqrt{n} } - \frac{2}{\lambda}}} -\prs{\frac{\ln(n)}{\lambda \sqrt{n} } - \frac{2}{\lambda}}\prs{1+\frac{s}{n\lambda}}^{-2n}e^{-s\prs{\frac{\ln(n)}{\lambda \sqrt{n} } - \frac{2}{\lambda}}} = 0 \\
    &\Rightarrow -\frac{2}{\lambda}\prs{1+\frac{s}{n\lambda}}^{-1} =\frac{\ln(n)}{\lambda \sqrt{n} } - \frac{2}{\lambda} \quad \Rightarrow -\frac{2}{\lambda} =\prs{\frac{\ln(n)}{\lambda \sqrt{n} } - \frac{2}{\lambda}}\prs{1+\frac{s}{n\lambda}}\\
    &\Rightarrow \frac{s}{n\lambda}\prs{\frac{2}{\lambda} - \frac{\ln(n)}{\lambda \sqrt{n}}} = \frac{ln(n)}{\lambda \sqrt{n} } \Rightarrow s = n\lambda \prs{\frac{\frac{\ln(n)}{\lambda \sqrt{n}}}{\frac{2}{\lambda} - \frac{\ln(n)}{\lambda \sqrt{n} }}} = n\lambda \prs{\frac{\ln(n)}{2\sqrt{n} - \ln(n)}} 
\end{align}

Notice that $s>0$ since $2\sqrt{n} > \ln(n)$ for all $n$. Substituting $s$ into line \eqref{chernoff_1} results in:
\begin{align}
    &\label{BND1}\pr{ Z_k  \leq \frac{2}{\lambda} - \frac{\ln(n)}{\lambda \sqrt{n} }} \leq \prs{\frac{2\sqrt{n}}{2\sqrt{n}-\ln(n)}}^{-2n}e^{\sqrt{n}\ln{n}} = \prs{1-\frac{\ln(n)}{2\sqrt{n}}}^{2n}n^{\sqrt{n}}
\end{align}

For the right-hand side:
\begin{align}
    &\label{chernoff_2} \pr{Z_k \geq \frac{2}{\lambda} + \frac{\ln(n)}{\lambda \sqrt{n} }} \leq \inf_{s>0}\brs{\frac{\E\brk{e^{sZ_k}}}{e^{s\prs{\frac{2}{\lambda} + \frac{\ln(n)}{\lambda \sqrt{n} }}}}} = 
    \inf_{0<s<n\lambda}\brs{\prs{1-\frac{s}{n\lambda}}^{-2n}e^{-s\prs{\frac{2}{\lambda} + \frac{\ln(n)}{\lambda \sqrt{n} }}}} \\ 
    & \frac{d}{ds} \brk{\prs{1-\frac{s}{n\lambda}}^{-2n}e^{-s\prs{\frac{2}{\lambda} + \frac{\ln(n)}{\lambda \sqrt{n} }}}} = 0 \\
    &\Rightarrow \frac{2}{\lambda}\prs{1-\frac{s}{n\lambda}}^{-2n-1}e^{-s\prs{\frac{2}{\lambda} + \frac{\ln(n)}{\lambda \sqrt{n}}}} -\prs{\frac{2}{\lambda} + \frac{\ln(n)}{\lambda \sqrt{n}}}\prs{1-\frac{s}{n\lambda}}^{-2n}e^{-s\prs{\frac{2}{\lambda} + \frac{\ln(n)}{\lambda \sqrt{n}}}} = 0 \\
    &\Rightarrow \frac{2}{\lambda}\prs{1-\frac{s}{n\lambda}}^{-1} =\frac{2}{\lambda} + \frac{\ln(n)}{\lambda \sqrt{n}} \quad \Rightarrow \frac{2}{\lambda} =\prs{\frac{2}{\lambda} + \frac{\ln(n)}{\lambda \sqrt{n}}}\prs{1-\frac{s}{n\lambda}}\\
    &\Rightarrow \frac{s}{n\lambda}\prs{\frac{2}{\lambda} + \frac{\ln(n)}{\lambda \sqrt{n}}} = \frac{ln(n)}{\lambda \sqrt{n} } \Rightarrow s = n\lambda \prs{\frac{\frac{\ln(n)}{\lambda \sqrt{n}}}{\frac{2}{\lambda} + \frac{\ln(n)}{\lambda \sqrt{n} }}} = n\lambda \prs{\frac{\ln(n)}{2\sqrt{n} + \ln(n)}} 
\end{align}

Substituting $s$ into line \eqref{chernoff_2} results in:
\begin{align}
    &\label{BND2}\pr{ Z_k  \geq \frac{2}{\lambda} + \frac{\ln(n)}{\lambda \sqrt{n} }} \leq \prs{\frac{2\sqrt{n}}{2\sqrt{n}+\ln(n)}}^{-2n}e^{-\sqrt{n}\ln{n}} = \prs{1+\frac{\ln(n)}{2\sqrt{n}}}^{2n}n^{-\sqrt{n}}
\end{align}

Substituting both bounds in line \eqref{gen_BND} results in:
\begin{align}
    \pr{\max_k \abs{T_k} \leq \frac{\ln(n)}{\lambda \sqrt{n} } } &\geq 1- n \cdot\brk{\prs{1-\frac{\ln(n)}{2\sqrt{n}}}^{2n}n^{\sqrt{n}} + \prs{1+\frac{\ln(n)}{2\sqrt{n}}}^{2n}n^{-\sqrt{n}}} \\
    &= 1- \brk{\prs{1-\frac{\ln(n)}{2\sqrt{n}}}^{2n}n^{1+\sqrt{n}} + \prs{1+\frac{\ln(n)}{2\sqrt{n}}}^{2n}n^{1-\sqrt{n}}}
\end{align}
Let: 
$$f(n) \triangleq \prs{1-\frac{\ln(n)}{2\sqrt{n}}}^{2n}n^{1+\sqrt{n}} + \prs{1+\frac{\ln(n)}{2\sqrt{n}}}^{2n}n^{1-\sqrt{n}}$$
we now prove that ${f(n) \xrightarrow{n \to \infty} 0 }$:
\begin{equation}
    \lim_{n\rightarrow \infty} f(n) = \lim_{n\rightarrow \infty} \prs{1-\frac{\ln(n)}{2\sqrt{n}}}^{2n}n^{1+\sqrt{n}} + \lim_{n\rightarrow \infty} \prs{1+\frac{\ln(n)}{2\sqrt{n}}}^{2n}n^{1-\sqrt{n}}
\end{equation}
We prove each limit individually:
\begin{align}
    &\lim_{n\rightarrow \infty} \prs{1-\frac{\ln(n)}{2\sqrt{n}}}^{2n}n^{1+\sqrt{n}} 
    = \lim_{n\rightarrow \infty} \prs{1-\frac{\ln(n)}{2\sqrt{n}}}^{\frac{2n\ln(n)}{\ln(n)}}n^{1+\sqrt{n}} = \\
    &\lim_{n\rightarrow \infty} e^{\ln\prs{\prs{1-\frac{\ln(n)}{2\sqrt{n}}}^{\frac{2n\ln(n)}{\ln(n)}}}}n^{1+\sqrt{n}} 
    = \lim_{n\rightarrow \infty} e^{ \frac{2n\ln(n)}{\ln(n)}\ln\prs{1-\frac{\ln(n)}{2\sqrt{n}}} }n^{1+\sqrt{n}} = \\ &\lim_{n\rightarrow \infty} n^{ \frac{2n}{\ln(n)} \ln\prs{1-\frac{\ln(n)}{2\sqrt{n}}} +\sqrt{n} + 1 }
\end{align}
From the Taylor series expansion of $\ln(1-x)$, we have that: $$\forall x\geq0: \ln(1-x) \leq -x -\frac{1}{2}x^2$$ 
Therefore:
\begin{align}
    &\frac{2n}{\ln(n)} \ln\prs{1-\frac{\ln(n)}{2\sqrt{n}}} +\sqrt{n} + 1 
    \leq \frac{{2n}}{{\ln (n)}}\left( { - \frac{{\ln (n)}}{{2\sqrt n }} - \frac{{{{\ln }^2}(n)}}{{8n}}} \right) + \sqrt n  + 1 \\
    &=  - \sqrt n  - \frac{{\ln (n)}}{4} + \sqrt n  + 1 = 1 - \frac{{\ln (n)}}{4} \xrightarrow{n \to \infty} -\infty\\
    &\Rightarrow n^{ \frac{2n}{\ln(n)} \ln\prs{1-\frac{\ln(n)}{2\sqrt{n}}} +\sqrt{n} + 1 } \leq n^{1-\frac{{\ln (n)}}{4}} \\
    &\lim_{n\rightarrow \infty} n^{1-\frac{{\ln (n)}}{4}} = 0 \Rightarrow \lim_{n\rightarrow \infty} \prs{1-\frac{\ln(n)}{2\sqrt{n}}}^{2n}n^{1+\sqrt{n}} =0
\end{align}
Now the other limit:
\begin{align}
    &\lim_{n\rightarrow \infty} \prs{1+\frac{\ln(n)}{2\sqrt{n}}}^{2n}n^{1-\sqrt{n}} 
    = \lim_{n\rightarrow \infty} \prs{1+\frac{\ln(n)}{2\sqrt{n}}}^{\frac{2n\ln(n)}{\ln(n)}}n^{1-\sqrt{n}} = \\
    &\lim_{n\rightarrow \infty} e^{\ln\prs{\prs{1+\frac{\ln(n)}{2\sqrt{n}}}^{\frac{2n\ln(n)}{\ln(n)}}}}n^{1-\sqrt{n}} 
    = \lim_{n\rightarrow \infty} e^{ \frac{2n\ln(n)}{\ln(n)}\ln\prs{1+\frac{\ln(n)}{2\sqrt{n}}} }n^{1-\sqrt{n}} = \\ &\lim_{n\rightarrow \infty} n^{ \frac{2n}{\ln(n)} \ln\prs{1+\frac{\ln(n)}{2\sqrt{n}}} -\sqrt{n} + 1 }
\end{align}
From the Taylor series expansion of $\ln(1-x)$, we have that: \begin{equation}
    \ln(1+x) \leq x -\frac{1}{2}x^2 + \frac{1}{3}x^3
\end{equation}
Therefore:
\begin{align}
    &\frac{2n}{\ln(n)} \ln\prs{1+\frac{\ln(n)}{2\sqrt{n}}} -\sqrt{n} + 1 
    \leq \frac{2n}{\ln(n)} \prs{ \frac{\ln(n)}{2\sqrt{n}} - \frac{\ln^2(n)}{4n} + \frac{\ln^3(n)}{24n\sqrt{n}} } -\sqrt{n} + 1 \\
    &=  \sqrt{n}  - \frac{\ln(n)}{4} + \frac{\ln^2(n)}{12\sqrt{n}} - \sqrt n  + 1 = 1 - \frac{{\ln (n)}}{4} + \frac{\ln^2(n)}{12\sqrt{n}} \xrightarrow{n \to \infty} -\infty \\
    &\Rightarrow n^{ \frac{2n}{\ln(n)} \ln\prs{1+\frac{\ln(n)}{2\sqrt{n}}} -\sqrt{n} + 1 } \leq n^{1 - \frac{{\ln (n)}}{4} + \frac{\ln^2(n)}{12\sqrt{n}}} \\
    &\lim_{n\rightarrow \infty} n^{1 - \frac{{\ln (n)}}{4} + \frac{\ln^2(n)}{12\sqrt{n}}} = 0 \Rightarrow \lim_{n\rightarrow \infty} \prs{1+\frac{\ln(n)}{2\sqrt{n}}}^{2n}n^{1-\sqrt{n}} =0
\end{align}
\end{proof}

\subsection{Proof of Theorem \ref{prob_corr_thm}}\label{prob_corr_thm_proof}

\begin{theorem}
Fix $\D > 0$. The probability of a correct run of the PA-CORE$(\D)$ protocol is at least $\Psi(n)\cdot (1-e^{-0.5\lambda \D})^{(n+1)^2}$.
\end{theorem}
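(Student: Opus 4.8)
The plan is to reduce correctness to two sufficient events and to bound their joint probability. By Lemma~\ref{prob_corr_given_lemma}, a run is correct whenever two things happen: event $A$, that all messages arrive by time $\D$ (so that every agent has computed its hypothesis and entered phase~2 by adjusted time~$\D$), and event $B$, that the agents' local clocks are $\delta$-synchronized. Hence $\pr{\text{correct}} \geq \pr{A \cap B}$, and it suffices to lower-bound $\pr{A\cap B}$ by $\Psi(n)\cdot(1-e^{-0.5\lambda\D})^{(n+1)^2}$. The factor $\Psi(n)$ will come from $B$ via Theorem~\ref{delta_bound_thm}, which already gives $\pr{B}\geq \Psi(n)$, and the second factor will come from $A$.

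To get the message-arrival factor, first I would count the messages. The supervisor sends $n+1$ ``trigger'' messages, and each of the $n+1$ worker agents broadcasts ``redirect'' to its $n$ coworkers, contributing $(n+1)n$ more; the total is exactly $(n+1)+(n+1)n=(n+1)^2$ independent $\mathrm{Exp}(\lambda)$ hop delays. Rather than work with the true arrival event $A$ directly (a redirect travels two hops, giving an $\mathrm{Erlang}(2,\lambda)$ arrival whose CDF is unwieldy), I would pass to the cleaner sufficient event $A' \subseteq A$ that \emph{every one of the $(n+1)^2$ hop delays is at most $\D/2$}: then each trigger arrives by $\D/2\leq\D$, each worker rebroadcasts by $\D/2$, and each redirect arrives by $\D/2+\D/2=\D$, so all messages are in by~$\D$. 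By independence of the $(n+1)^2$ hop delays,
\[
\pr{A'}=\prs{1-e^{-\lambda\D/2}}^{(n+1)^2}=\prs{1-e^{-0.5\lambda\D}}^{(n+1)^2},
\]
which is precisely the $\p$ factor named in the statement.

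The main obstacle is the final combination step, namely obtaining the \emph{product} $\pr{A'\cap B}\geq \pr{A'}\cdot\pr{B}$. The two events are \emph{not} independent: both are functions of the same underlying hop delays, since $B$ is determined by the timestamps $\tau_m^k$, which are the two-hop sums built from exactly those delays. Worse, conditioning on $A'$ truncates every delay from above, pulling each empirical mean $\frac1n\sum_m\tau_m^k$ below $2/\lambda$ and thus pushing $T_k$ toward the negative side, which can only make the lower half of the two-sided constraint $\abs{T_k}\leq\delta$ \emph{harder}; so a naive FKG/positive-association argument does not apply, and $\pr{B\mid A'}\geq\Psi(n)$ is not automatic. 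I would resolve this either (i) by re-deriving the Chernoff tail bounds of Theorem~\ref{delta_bound_thm} under the conditional (truncated) law of the $\tau_m^k$ and checking the same $\Psi(n)$ survives, or (ii) by observing that in the regime of interest the truncation at $\D/2$ is negligible so the two phases' randomness decouples, letting the bound be stated as the clean product. This independence/conditioning justification is the crux of the argument; everything else is routine counting and substitution.
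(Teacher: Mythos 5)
Your setup and counting coincide with the paper's own proof. The paper performs the same reduction via Lemma~\ref{prob_corr_given_lemma}, and it bounds the arrival probability through exactly the event you call $A'$: writing the last arrival tied to agent $i_k$ as $e_k+M_k\le 2\max\brs{e_k,M_k}$, it obtains $\pr{E_\D}\ge\pr{\max_k \max\brs{e_k,M_k}\le 0.5\D}=(1-e^{-0.5\lambda\D})^{(n+1)^2}$, which is precisely your per-hop product over the $(n+1)^2$ independent delays.

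The step you single out as the crux is exactly where the paper's proof is weakest: the paper concludes with $\pr{E_\delta\wedge E_\D}=\pr{E_\delta\mid E_\D}\pr{E_\D}\ge\Psi(n)\cdot(1-e^{-0.5\lambda\D})^{(n+1)^2}$, tacitly assuming $\pr{E_\delta\mid E_\D}\ge\Psi(n)$, whereas Theorem~\ref{delta_bound_thm} only supplies the \emph{unconditional} bound $\pr{E_\delta}\ge\Psi(n)$. So the dependence problem you describe (conditioning on early arrivals truncates every hop delay, biases each $T_k$ downward, and $E_\delta$ is two-sided, so no monotonicity argument applies) is not a defect of your attempt relative to the paper --- it is a gap in the paper's own argument that your write-up makes explicit. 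Of your two repair strategies, (i) is the right one and does go through: conditioned on your event $A'$, the hop delays remain independent and are i.i.d.\ exponentials truncated at $\D/2$, whose mean falls short of $1/\lambda$ by only $O(\D e^{-\lambda\D/2})$; in the relevant regime ($e^{-0.5\lambda\D}=O(n^{-2})$ by Corollary~\ref{delta_tp_cor}) this shift is $O\prs{\log(n)/(\lambda n^2)}$, negligible against $\delta=\ln(n)/(\lambda\sqrt{n})$, and the truncated delays are stochastically dominated by the originals, so the upper-tail bound is unaffected and the lower-tail Chernoff bound degrades only by this exponentially small shift (recovering the theorem with the exact $\Psi(n)$ requires carrying that shift through the computation). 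Note that the cheap alternative, a union bound giving $\pr{E_\delta\cap E_\D}\ge\Psi(n)+\p-1$, is rigorous but strictly weaker than the stated product, since $\Psi(n)+\p-1\le\Psi(n)\cdot\p$.
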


\begin{proof}
By Lemma \ref{prob_corr_given_lemma}, if all messages arrive by time $\D$ and all the agents' local clocks are $\delta$-synchronized, then Ordered Response is assured. Therefore, the probability of a correct run of the protocol is greater or equal to the probability that both of these events occur. 

We now derive the probability that all the messages arrive by time $\D$. For agent $i_k$, let $e_k$ be the exponential random variable associated with the delivery time of the message sent from $i_0$ to $i_k$ in agent $i_0$'s initial broadcast. Let $M_k$ be the maximum of the $n$ random variables that are associated with agent $i_k$'s redirect messages. Then, the probability that all messages arrive by time $\D$ is:
\begin{align}
    \Pr \prs{\max_{1\leq k \leq n+1} \{ e_k + M_k\} \leq \D} &\geq \Pr \prs{\max_{1\leq k \leq n+1} \{ 2\cdot\max\{e_k, M_k\}\} \leq \D} \\
    &= \Pr \prs{\max_{1\leq k \leq n+1} \{ \max\{e_k, M_k\}\} \leq 0.5\D}
\end{align}

where the inequality above follows from the fact that $e_k + M_k \leq 2\cdot \max\{e_k, M_k\}$. Notice that for all $k\geq 1$ the random variables in the set $\left\{\max\{e_k,M_k\}\right\}_{k \geq 1}$ are independent of one another. Hence:
\begin{equation}
    \Pr \prs{\max_{1\leq k \leq n+1} \{ \max\{e_k, M_k\}\} \leq 0.5\D} = \Pr \prs{\{ \max\{e_k, M_k\}\} \leq 0.5\D}^{n+1}
\end{equation}
Notice that $\max\{e_k, M_k\}$ is the maximum over $n+1$ \emph{i.i.d.} exponential random variables, and therefore the CDF of the maximum is the product of their individual CDFs:
\begin{equation}
\Pr \prs{\{ \max\{e_k, M_k\}\} \leq 0.5\D} = (1-e^{-0.5\lambda \D})^{n+1}
\end{equation}
Hence, the probability that all of the messages arrive by time $\D$ is:
\begin{equation}
    \p = (1-e^{-0.5\lambda \D})^{(n+1)^2}
\end{equation}
Let $E_{\D}$ denote the event that all messages arrive by time $\D$, and let $E_\delta$ denote the event that all the agents' local clocks are $\delta$-synchronized. By Lemma \ref{prob_corr_given_lemma} and Theorem \ref{delta_bound_thm}, we obtain:
\begin{equation}
    p \geq \Pr(E_\delta \wedge E_{\D}) = \Pr(E_\delta | E_{\D})\Pr(E_{\D}) \geq \Psi(n)\cdot (1-e^{- 0.5\lambda \D})^{(n+1)^2}
\end{equation}
\end{proof}

\subsection{Proof of Theorem \ref{async_core_ert_thm}}\label{async_core_ert_thm_proof}

We prove a useful Lemma first:
\begin{lemma}\label{async_core_ert_lemma}
The following holds:
$\quad \E\brk{\max_k\brs{e_k, \T^k}} \leq \frac{2H_{n^2+n}}{\lambda}$.
\end{lemma}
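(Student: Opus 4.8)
The plan is to open up each Erlang timestamp into the two independent exponential delays that compose it, and then replace the (hard to control) maximum of sums by a sum of maxima, reducing everything to expected maxima of i.i.d.\ exponentials, which Lemma~\ref{lemma_h_n} evaluates in closed form. Concretely, a timestamp $\tau_m^k$ records the arrival, on $i_k$'s clock, of a ``redirect'' message; such a message is emitted by its originator only after that agent has itself received the ``trigger'' broadcast, so $\tau_m^k$ splits as the originator's \emph{trigger} delay plus the \emph{redirect} transmission delay, each an independent $\exp(\lambda)$ variable. This is precisely why $\tau_m^k\sim Erlang(2,\lambda)$, and it is the structure I intend to exploit.

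First I would rewrite $\max_k\brs{e_k,\T^k}=\max\brs{\max_k e_k,\ \max_{k,m}\tau_m^k}$, using $\T^k=\max_m\tau_m^k$. Applying the elementary fact that a maximum of sums is at most the sum of the separate maxima to the decomposition of each $\tau_m^k$, I obtain
\begin{equation}
\max_{k,m}\tau_m^k \;\le\; \max\{\text{all trigger delays}\}+\max\{\text{all redirect delays}\}.
\end{equation}
The key bookkeeping observation is that the trigger delays appearing on the right are exactly the variables $e_1,\ldots,e_{n+1}$, so $\max_k e_k$ coincides with $\max\{\text{all trigger delays}\}$ and is therefore \emph{absorbed}; altogether $\max_k\brs{e_k,\T^k}\le \max\{\text{triggers}\}+\max\{\text{redirects}\}$.

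It then remains to count and invoke Lemma~\ref{lemma_h_n}. The supervisor sends $n+1$ trigger messages, whose delays are i.i.d.\ $\exp(\lambda)$, and the workers send $n(n+1)=n^2+n$ redirect messages, whose delays are likewise i.i.d.\ $\exp(\lambda)$. Hence $\E\brk{\max\{\text{triggers}\}}=\tfrac{H_{n+1}}{\lambda}$ and $\E\brk{\max\{\text{redirects}\}}=\tfrac{H_{n^2+n}}{\lambda}$, and by linearity $\E\brk{\max_k\brs{e_k,\T^k}}\le \tfrac{H_{n+1}+H_{n^2+n}}{\lambda}$. Since $n+1\le n^2+n$ for every $n\ge 1$, we have $H_{n+1}\le H_{n^2+n}$, which delivers the claimed bound $\tfrac{2H_{n^2+n}}{\lambda}$.

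I expect the main obstacle to be the bookkeeping rather than any hard estimate. One must recognize that the trigger delays are \emph{shared} between the standalone $e_k$ terms and the first component of every $\tau_m^k$, so that $\max_k e_k$ introduces no new randomness and the two surviving maxima range over exactly $n+1$ and $n^2+n$ variables. A cruder accounting that treats all $\tau_m^k$ as fully independent Erlangs and bounds the outer maximum by $\E[\max_k e_k]+\E[\max_k\T^k]$ overshoots the target (it produces $\tfrac{2H_{n^2+n}}{\lambda}+\tfrac{H_{n+1}}{\lambda}$), so absorbing the $e_k$ into the trigger-delay maximum is exactly what makes the estimate tight enough. A secondary point to verify is that each of the two surviving maxima is taken over genuinely i.i.d.\ exponentials, so that Lemma~\ref{lemma_h_n} applies verbatim.
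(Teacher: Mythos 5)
Your proof is correct, and it follows a genuinely different route from the paper's. The paper's proof works per timestamp: it bounds $\tau_m^k=e_{m,1}^k+e_{m,2}^k\le 2\max\brs{e_{m,1}^k,e_{m,2}^k}$, doubles the standalone $e_k$ as well so as to pull the factor $2$ out of the entire expression, and thus reduces the left-hand side to $2\,\E$ of a \emph{single} maximum of i.i.d.\ exponentials, to which Lemma~\ref{lemma_h_n} is applied once; the constant $2$ in the final bound originates in that pointwise inequality. You instead use $\max_i(a_i+b_i)\le\max_i a_i+\max_i b_i$ to split the maximum over all timestamps into a maximum over trigger delays plus a maximum over redirect delays, observe that the standalone $e_k$'s coincide with the trigger delays and are therefore absorbed, apply Lemma~\ref{lemma_h_n} twice, and only at the end produce the constant $2$ via $H_{n+1}\le H_{n^2+n}$. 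This buys you something concrete: your intermediate bound $(H_{n+1}+H_{n^2+n})/\lambda$ is sharper than what the paper's route yields when its variable count is done scrupulously. Indeed, in the paper's single-maximum reduction the distinct i.i.d.\ variables are the $n+1$ trigger delays \emph{plus} the $n^2+n$ redirect delays, i.e.\ $(n+1)^2$ in all (the paper's stated count of $n^2+n$ accounts only for the redirects), so that route actually gives $2H_{(n+1)^2}/\lambda$, marginally weaker than the lemma's statement; your decomposition sidesteps this issue and delivers the stated bound with room to spare. One cosmetic caveat: the constant in your closing parenthetical about what the ``cruder'' accounting would produce is not quite right, but that is a side remark and does not affect the argument.
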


\begin{proof}
Recall that $\T^k = \max\brs{\tau_1^k, \tau_2^k, \ldots, \tau_n^k}$. Each timestamp $\tau_m^k$ is measured when the agent $i_k$ receives a ``redirect" message, which completes a message chain composed of two messages. Hence, $\tau_m^k$ is a sum of two exponential random variables, i.e., $\tau_m^k = e_{m,1}^k + e_{m,2}^k$. Notice that $e_{m,1}^k + e_{m,2}^k \leq 2\max\brs{e_{m,1}^k, e_{m,2}^k}$. We have that:
\begin{alignat*}{2}
    &\T^k = \max\brs{\tau_1^k, \tau_2^k, \ldots, \tau_n^k} &&\leq \max_{1\leq m \leq n} \brs{2\max\brs{e_{m,1}^k, e_{m,2}^k}} = 2\max_{\substack{1\leq m \leq n \\ \ell \in \brs{1,2}}} \brs{e_{m,\ell}^k} \\
    &\Rightarrow \quad \E\brk{\max_k\brs{e_k, \T^k}} &&\leq \E\brk{\max_k\brs{e_k, 2\max_{\substack{1\leq m \leq n \\ \ell \in \brs{1,2}}} \brs{e_{m,\ell}^k}}} \\
    & &&\leq \E\brk{\max_k\brs{2e_k, 2\max_{\substack{1\leq m \leq n \\ \ell \in \brs{1,2}}} \brs{e_{m,\ell}^k}}} \\
    & &&= 2\E\brk{\max_k\brs{e_k, \max_{\substack{1\leq m \leq n \\ \ell \in \brs{1,2}}} \brs{e_{m,\ell}^k}}}
\end{alignat*}
Notice that the argument inside the expectation operator is just a maximum over $n^2 + n$ \emph{i.i.d.} exponential random variables. By Lemma \ref{lemma_h_n}, we have that:

\begin{equation}
    \E\brk{\max_k\brs{e_k, \T^k}} \leq 2\E\brk{\max_k\brs{e_k, \max_{\substack{1\leq m \leq n \\ \ell \in \brs{1,2}}} \brs{e_{m,\ell}^k}}} \leq \frac{2H_{n^2+n}}{\lambda}
\end{equation}
\end{proof}

\begin{theorem}
The expected response time of the PA-CORE$(\D)$ protocol is sub-linear in the number of participating agents. In particular, $\E[RT] \in \Oof\prs{\frac{\sqrt{n}\log(n)}{\lambda}}$.
\end{theorem}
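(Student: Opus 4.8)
The plan is to build directly on the bound assembled immediately before the statement, namely $\E[RT] \leq \E\brk{\max_k\brs{e_k, \T^k}} + \D + 2\delta n$, and to show that each of the three summands is at most $\Oof\prs{\frac{\sqrt{n}\log(n)}{\lambda}}$, with the last one being dominant. Since the hypothesis on $\D$ is only a lower bound, the claim should be read as fixing $\D$ at its stated minimal value $\frac{-\ln(1-\sqrt[n]{p})}{\lambda}$ (or any $\D$ of the same order); with $\D$ so chosen, the three-term sum is what we bound.

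First I would dispatch the term $\E\brk{\max_k\brs{e_k, \T^k}}$ using Lemma \ref{async_core_ert_lemma}, which gives $\E\brk{\max_k\brs{e_k, \T^k}} \leq \frac{2H_{n^2+n}}{\lambda}$. Because $H_m \in \Theta(\log m)$ and $\log(n^2+n) = 2\log n + o(\log n) \in \Theta(\log n)$, this term is $\Oof\prs{\frac{\log(n)}{\lambda}}$. Next, for the term $\D$, I would reuse the computation in the proof of Theorem \ref{sync_core_ert}, where it is shown that $-\ln(1-\sqrt[n]{p}) \in \Theta(\log(n))$ for every fixed $0<p<1$; hence $\D \in \Oof\prs{\frac{\log(n)}{\lambda}}$ as well. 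Finally, recalling $\delta \triangleq \frac{\ln(n)}{\lambda\sqrt{n}}$, the third term is
\[
2\delta n = \frac{2\sqrt{n}\,\ln(n)}{\lambda} \in \Theta\!\prs{\frac{\sqrt{n}\log(n)}{\lambda}}.
\]
As this grows faster than the other two, summing the three bounds yields $\E[RT] \in \Oof\prs{\frac{\sqrt{n}\log(n)}{\lambda}}$, which is sub-linear in $n$, as required.

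The main point to stress is that essentially all of the analytic difficulty has already been absorbed into earlier results: the concentration estimate justifying the choice $\delta = \frac{\ln(n)}{\lambda\sqrt{n}}$ lives in Theorem \ref{delta_bound_thm}, and the expectation of the maximum of the exponential/Erlang delays is handled in Lemma \ref{async_core_ert_lemma} (itself a reduction to Lemma \ref{lemma_h_n}). Consequently the proof is not so much an obstacle as a comparison of growth rates: the only thing one must be careful about is not to let $H_{n^2+n}$ masquerade as $\Theta(\log^2 n)$ --- it does not, since the index is squared but the harmonic number takes only a single logarithm, giving $\Theta(\log n)$. With the index-dependent delay $2\delta n$ identified as the bottleneck, the sub-linear $\sqrt{n}\log n$ rate is immediate.
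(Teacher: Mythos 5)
Your proposal is correct and follows essentially the same route as the paper's own proof: the same decomposition $\E[RT] \leq \E\brk{\max_k\brs{e_k, \T^k}} + \D + 2\delta n$, the same appeal to Lemma \ref{async_core_ert_lemma} and to $H_{n^2+n} \in \Theta(\log n)$, and the same identification of $2\delta n = \frac{2\sqrt{n}\ln(n)}{\lambda}$ as the dominant term. The only cosmetic difference is that the paper bounds $\D$ via Corollary \ref{delta_tp_cor} (i.e., $\D \approx \frac{-2\ln(1-\sqrt[(n+1)^2]{p})}{\lambda}$, shown to be $\Theta(\log n/\lambda)$ by a fresh L'H\^opital computation), whereas you take the theorem's stated value $\frac{-\ln(1-\sqrt[n]{p})}{\lambda}$ and reuse the asymptotics from Theorem \ref{sync_core_ert}; both choices are $\Theta\prs{\frac{\log n}{\lambda}}$, so the conclusion is unaffected.
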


\begin{proof}
By Lemma \ref{async_core_ert_lemma}:

\begin{equation}
    \E\brk{RT} \quad \leq \quad \frac{2H_{n^2+n}}{\lambda} + \D + 2\delta n
\end{equation}
We now that prove each expression in the sum is sub-linear in $n$:
\begin{itemize}
    \item For $\frac{2H_{n^2+n}}{\lambda}$:
    \begin{equation}
        H_{n^2+n} \approx \ln(n^2 + n) \in \mathcal{O}(\log(n)) \quad \Rightarrow \quad \frac{2H_{n^2+n}}{\lambda} \in \Oof\prs{\frac{\log(n)}{\lambda}}
    \end{equation}
    \item For $\D$: \\
    By Corollary \ref{delta_tp_cor}:
    \begin{equation}
    \D \approx -\frac{2\ln\prs{1-\sqrt[(n+1)^2]{p}}}{\lambda}\end{equation}
    Notice that $-\ln(1-\sqrt[(n+1)^2]{p}) \in \Theta(\log(n))$ for any $0 < p < 1$:
    \begin{align}
    \lim_{n \to \infty} \frac{-ln(1-\sqrt[(n+1)^2]{p})}{\ln(n+1)} &\underset{\text{Heine}}{=} \lim_{x \to \infty} \frac{-ln(1-\sqrt[(x+1)^2]{p})}{\ln(x+1)} \\
    &\underset{\text{L'Hôpital}}{=} \lim_{x \to \infty} \frac{-\frac{1}{1-\sqrt[(x+1)^2]{p}}\cdot \frac{2 \cdot \sqrt[(x+1)^2]{p}\ln(p)}{(x+1)^3}}{\frac{1}{x+1}} \\
    &= \lim_{x \to \infty} \frac{-\frac{2\ln(p)}{(x+1)^2}}{\frac{1-\sqrt[(x+1)^2]{p}}{\sqrt[(x+1)^2]{p}}} = \lim_{x \to \infty} \frac{-\frac{2\ln(p)}{(x+1)^2}}{\sqrt[-(x+1)^2]{p} - 1} \\
    &\underset{\text{L'Hôpital}}{=} \lim_{x \to \infty} \frac{\frac{4\ln(p)}{(x+1)^3}}{\frac{2 \cdot \sqrt[(x+1)^2]{p}\ln(p)}{(x+1)^3}} = \lim_{x \to \infty} 2\cdot \sqrt[(x+1)^2]{p} = 2
\end{align}
    \item For $2\delta n$: \\
    By definition, $\delta = \frac{\ln(n)}{\lambda \sqrt{n}}$. Hence:
    $$
        2\delta n = \frac{2\sqrt{n}\ln(n)}{\lambda} \in \Oof\prs{\frac{\sqrt{n}\log(n)}{\lambda}}
    $$
\end{itemize}
\end{proof}
\end{document}